\def\bq{\begin{equation}}
\def\eq{\end{equation}}
\def\bqy{\begin{eqnarray}}
\def\eqy{\end{eqnarray}}
\def\bal#1\eal{\begin{align}#1\end{align}}
\newtheorem{lemma}{Lemma}
\def\bfJ{\mathbf{J}}
\def\bfP{\mathbf{P}}
\def\bfm{\mathbf{m}}
\def\bfx{\mathbf{x}}
\def\bfv{\mathbf{v}}
\def\bfu{\mathbf{u}}
\def\bfp{\mathbf{p}}
\def\bfp{\mathbf{p}}
\def\bfxi{\boldsymbol{\xi}}
\def\calb{\mathcal{B}}
\def\calc{\mathcal{C}}
\def\calf{\mathcal{F}}
\def\call{\mathcal{L}}
\def\R{\mathbb{R}}
\def\bq{\begin{equation}}
\def\eq{\end{equation}}
\def\bqy{\begin{eqnarray}}
\def\eqy{\end{eqnarray}}
\def\bal#1\eal{\begin{align}#1\end{align}}
\def\quadd{\quad\quad}
\def\4tensLa{\bar{\bar{\Lambda}}}
\def\tensLa{\bar{\Lambda}}
\def\tensI{\bar{I}}
\def\tensD{\bar{D}}
\def\tenska{\bar{\kappa}}
\def\tensJ{\bar{J}}
\def\al{\alpha}
\def\be{\beta}
\def\de{\delta}
\def\ep{\epsilon}
\def\et{\eta}
\def\ga{\gamma}
\def\Ga{\Gamma}
\def\la{\lambda}
\def\La{\Lambda}
\def\Om{\Omega}
\def\si{\sigma}
\def\Si{\Sigma}
\def\ze{\zeta}
\def\p{\partial}
\def\nn{\nonumber}
\newtheorem{proof}{Proof}
\journal{ }
\begin{document}

\begin{frontmatter}


\title{Thermodynamically consistent Cahn-Hilliard-Navier-Stokes  equations using the metriplectic
dynamics formalism}



\author[inst1,inst2]{Azeddine Zaidni}
\affiliation[inst1]{organization={Mohammed VI Polytechnic University, College of Computing lab},
            addressline={Lot 660, Hay Moulay Rachid }, 
            city={Ben Guerir},
            postcode={43150}, 
            country={Morocco}}

\author[inst2]{Philip J Morrison}
\affiliation[inst2]{organization={Department of Physics and Institute for Fusion Studies, University of Texas at Austin},
            city={Austin},
            postcode={78712}, 
            state={TX},
            country={USA}}
\author[inst3]{Saad Benjelloun}
\affiliation[inst3]{organization={Makhbar Mathematical Sciences Research Institute},
            city={Casablanca},
            country={Morocco}}

\begin{abstract}
Cahn-Hilliard-Navier-Stokes (CHNS) systems describes flows with two-phases, e.g., a liquid with bubbles. Obtaining constitutive relations for general dissipative processes for such a systems, which are thermodynamically consistent, can be a challenge. We show how the metriplectic 4-bracket formalism \citep{pjmU23}  achieves this in a straightforward, in fact algorithmic,  manner. First, from the noncanonical Hamiltonian formulation for the ideal part of a CHNS system we obtain an appropriate Casimir to serve as the entropy in the metriplectic  formalism  that describes the dissipation (e.g. viscosity, heat conductivity and diffusion effects). General thermodynamics with the concentration variable and its thermodynamics conjugate,
the chemical potential, are included. 
Having expressions for the Hamiltonian (energy), entropy, and Poisson bracket, we describe a procedure for obtaining a metriplectic 4-bracket that describes thermodynamically consistent dissipative effects.  The 4-bracket formalism leads naturally to a general CHNS system that allows for anisotropic surface energy effects.  This  general CHNS system  reduces to cases in the literature, to which we can compare. 
\end{abstract}



\begin{keyword}
Two phase fluid flow \sep Cahn-Hilliard  \sep thermodynamic consistency \sep metriplectic dynamics \sep Hamiltonian structure
\PACS 0000 \sep 1111
\MSC 0000 \sep 1111
\end{keyword}

\end{frontmatter}


\tableofcontents

\newpage
\section{Introduction}

The well-known Navier-Stokes equations govern the motion of a single-phase fluid. However, in the case of two-phase fluids,  chemical reactions,  changes of  phase, and  migration between substances of phases become significant and cannot be disregarded. J.\ W.\  Cahn and J.~E.\ Hilliard were the first to formulate the mathematical equations that describe  phase separation in a  such a binary fluid  \citep{cahn1958free}. Here we investigate generalizations that  combine   the Cahn-Hilliard equation with equations that describe the dynamics of fluid flow, referred to as  Cahn-Hilliard-Navier-Stokes (CHNS) systems. CHNS systems aim to describe the hydrodynamic properties of a mixture of two phases such as bubbles in a  liquid.  To narrow down the already broad scope,  we assume that the two fluids share the same velocity field, yet we allow for both extended thermodynamics and diffusive interfaces between the two phases.

A substantial hurdle in developing CHNS type systems, systems with a variety of constitutive relations,  is to ensure thermodynamic consistency, i.e., adherence to the first law of thermodynamics, which in this context is to produce a set of dynamical equations that conserve  energy, and the second law which in this context means the  dynamical production of entropy,  ensuring the relaxation  (asymptotic stability) to thermodynamic equilibrium.  Here we propose an algorithm for constructing such systems, an algorithm that produces a large set of CHNS systems.  

The algorithm has four steps: i) Select a set of  dynamical variables. For a CHNS system these will be 
$\psi\coloneqq\{\bfm=\rho\bfv, \rho, \tilde{c}=\rho  c,  \si=\rho   s\}$, which  are  the  momentum density, mass density, volume concentration of one of the constituents, and entropy density, respectively. ii) The next step is to  select  energy and entropy functionals,  $H$ and $S$, dependent on the dynamical variables.  The choice of these functionals is based on the physics of  the phenomena one wishes to describe.  iii) The third step of the algorithm is to obtain the noncanonical Poisson bracket  \citep[see][]{pjm98} of  the ideal (nondissipative) part of the  theory that has the chosen entropy as a Casimir invariant.  Since the work of  \citet{pjmG80}, Poisson brackets for a great many systems, including fluid and magnetofluid systems, have been found  \citep[e.g.][]{pjm82,pjmT00,hamdi,pjmDL16,zaidni1}.  Thus, this step may be immediate.  Alternatively,  it  may be achieved by a coordinate change from a known Hamiltonian theory in order to align with the chosen entropy functional.  In either case,  we obtain at this stage a noncanonical Hamiltonian system.  iv) The final step is to construct a metriplectic 4-bracket as described in \citet{pjmU23}.  Although there are standard metriplectic 4-bracket constructions, there is freedom at this last step to describe a variety of types of dissipation.  However, a natural choice follows upon consideration of the form of an early  metriplectic bracket \citep{pjm84b}.  Given $H$, $S$, and the 4-bracket, the dynamical  system with  thermodynamically consistent dissipation is produced.

We apply the algorithm to two cases. First, in \S~\ref{sec:GNS}, we consider a system where the fluid  thermodynamics is extended by allowing the internal energy to depend on a concentration variable,  with the chemical potential being  its thermodynamic dual.   Because Gibbs introduced the notion of chemical potential,   we refer to the Hamiltonian version of this fluid systems as the Gibbs-Euler (GE) system and the dissipative version    as the  Gibbs-Navier-Stokes  (GNS) system. It is a  
thermodynamically consistent version of the compressible Navier-Stokes equations with the inclusion of this  concentration variable for describing a second phase of the fluid.   The GNS system generalizes  the early work of \citet{eckart1,eckart2} and the  treatment in \citep{de2013non}; it allows for all possible thermodynamics fluxes.  Next, in \S~\ref{sec:DICHNS},  a general form  of  CHNS  system  is produced, a form that models surface tension effects and allows for diffuse interfaces.  Our work is motivated in large part by  the substantial works of  \citet{anderson2000phase} and  \citet{Guo}, which we generalize by obtaining a class of systems that includes theirs as special cases. There is a huge literature on this topic  and these papers contain  many important references to previous work.   \citep[Also, see][for a recent review.]{Eikelder23} 

The GNS system of  \S~\ref{sec:GNS} serves as a  straightforward example of our algorithm.  In \S\S~\ref{ssec:SIDescription} we describe the set of dynamical variables, properties of the system, the energy and entropy functionals $H$ and $S$.  This amounts to the first and second steps of the  algorithm.  Then in \S\S~\ref{ssec:SIPbkt} the Hamiltonian formulation of the dissipation free part of the system is presented.    This is the third step of the algorithm where the Poisson bracket is obtained, after a brief review of the noncanonical Hamiltonian formalism.  Given the early work of \citet{pjmG80} and the classification  of  extensions in \citep{pjmT00}, this step is immediate.  Based on the early and recent works of \citet{pjm84b} and \cite{pjmU23} the fourth step of the algorithm is also immediate. In   \S\S~\ref{ssec:SI4bkt} we first review the metriplectic 4-bracket formalism and present the realization that applies for the GNS system.  Thus, the thermodynamically consistent GNS  system  is determined. In \S\S~\ref{ssec:SI2bkt} we obtain the metriplectic 2-bracket  equations of motion, and the determined fluxes and affinities, making connection to standard irreversible thermodynamics. 
Using the results of  \S~\ref{sec:GNS}, we proceed in \S~\ref{sec:DICHNS} to obtain the main result of the paper, our general CHNS system that can describe diffuse interface effects. The first and second steps of our algorithm are taken  in  \S\S~\ref{ssec:chnsHS}, while the third step, obtaining the correct  Poisson bracket,  is undertaken in \S\S~\ref{ssec:chnsPB}.  In order to complete this step, one must find the Poisson bracket for which the entropy of the second step is a Casimir invariant, which we find can be achieved by a simple coordinate transformation.  The fourth step of the algorithm is taken  in  \S\S~\ref{ssec:chns4MB}.  Here a   choice of metriplectic 4-bracket gives a general class of thermodynamically consistent CHNS systems, a class that contains previous results as special cases.   The formalism also shows how one can transform to a simple entropy variable at the expense of a more complicated internal energy. As in \S~\ref{sec:GNS}, in \S\S~\ref{ssec:chns2MB} we reduce to  the metriplectic 2-bracket. 
Finally in \S~\ref{sec:conclu} we briefly summarize and make a few comments about ongoing and future work.

\section{Metriplectic framework and the Gibbs-Navier-Stokes system}
\label{sec:GNS}

In this section we describe general features of the metriplectic framework in the context of the  GNS system, a generalization of the Navier-Stokes equations that includes the dual thermodynamical variables  of  concentration and chemical potential. 

\subsection{Description of the Gibbs-Navier-Stokes system}
\label{ssec:SIDescription}

The GNS for 2 phase flow proceeds on familiar ground  \citep{eckart1,eckart2,de2013non}.  It amounts to the single phase thermodynamic  Navier-Stokes  system or as it is sometimes called the Fourier Navier-Stokes  system with the dispersed   
phase described by the addition of a concentration variable, $c$, giving the set of dynamical variables $\psi=\{ \bfv, \rho, c, s \}$.   Here we review global aspects of this known system, before showing how it emerges from the metriplectic formalism.

We suppose the mixture of two phases are contained in a  volume  $\Omega$, and we consider the following global quantities and their evolution:
\bal
    {M} &= \int_{\Omega} \rho\,,\ \ 
    \dot{M} = 0,
    \label{glbM}
\\
       \mathbf{P} &= \int_{\Omega} \rho\,  \mathbf{v}\,, \ \ 
        \dot{\bfP} = -\int_{\partial\Omega}  \bar{J}_{\mathbf{m}}\cdot\mathbf{n},
\\
    H &=\!  \int_{\Omega} \frac{\rho}{2}\, |\mathbf{v}|^2   + \rho\, u(\rho,s,c)\,,
    \  \ 
      \dot{H}= \!-\int_{\partial\Omega} {\bfJ}_{e} \cdot\mathbf{n}\,,
         \label{hamH}
         \\
           {C} &=\!\int_{\Omega} \rho\,  c\,, \ \ 
   \dot{C} = \!- \int_{\partial \Omega}  {\bfJ}_c\cdot\mathbf{n},
\\
     {S} &= \int_{\Omega} \rho \, s\,, \ \ 
  \dot{S}= -\int_{\partial\Omega}  {\bfJ}_{s}\cdot\mathbf{n} + \int_{\Omega} \Dot{s}^{prod}\,.
    \label{entS}
\eal
Here  $\rho$ is the density of the mixture, $\mathbf{v}$ is the mass-averaged velocity of the mixture, $s$ is the specific entropy, and the phase variable $c$ is the specific concentration {(dimensionless mass concentration)}  that determines how much of the dispersed  phase  of the mixture is present at a point  $\bfx \in \Omega\subset\R^3$.  The variable $\tilde{c}=\rho c$ is  the mass density of the dispersed phase. The  functionals ${M}$, ${P}$, $H$ and ${S}$ are the total mass, momentum, energy, and entropy of the mixture,  respectively, while ${C}$ is the total {mass of one of the constituents}.  For convenience we will omit the incremental volume element for integrations over $\Omega$, i.e., $\int_\Om= \int_\Om d^3x$ and we used an over dot to mean the total derivative $d/dt$.  The local thermodynamics  of the mixture is described by $u(\rho,s,c)$,  the internal energy per unit mass.  For convenience the gravitational force  is not considered, although its inclusion is straightforward.

Quantities in the time derivatives of the basic functionals are as follows:   $\mathbf{n}$ is  the unit outward normal vector of the boundary $\partial\Omega$, ${\bfJ}_c$ is the phase field flux, which depends on gradient of the chemical potential,  
${\tensJ}_{\mathbf{m}}$ is the stress tensor -- surface forces -- due to pressure and viscosity, ${\bfJ}_e$ the energy flux that contains the rate of work done by the surface forces (external energy), the rate of heat transfer and the rate of diffusivity in phase field (internal energy), ${\bfJ}_s$ is the net entropy flux through the boundary, and  $\Dot{s}^{prod}$ is the local rate of entropy production. The second law of thermodynamics is expressed by the requirement that  $\Dot{s}^{prod}$ is non-negative. 

For the GNS system the fluxes are given by 
\bal
    {\bfJ}_c &= - \tensD\cdot \nabla \mu\,,
\\
     \bar{J}_{\mathbf{m}} &=  p\,{\tensI} - \4tensLa:\nabla \mathbf{v}\,,
\\
     \bfJ_e &= -\mathbf{v}\cdot\4tensLa:\nabla\mathbf{v} - \tenska\cdot \nabla T -\mu \tensD\cdot\nabla\mu\,,  
    \label{eneryflux}
\\
     {\bfJ}_s &= - \frac{\tenska}{T}\cdot\nabla T\,,
\\
     \Dot{s}^{prod} &= \frac{1}{T}\Big[\nabla\mathbf{v}:\4tensLa:\nabla\bfv+ \frac{1}{T}\, \nabla T\cdot\tenska \cdot\nabla T 
    \nn\\
    &  \hspace{2 cm} +  \nabla\mu\cdot \tensD\cdot\nabla\mu\Big] \geq 0\,,
     \label{entropy1}
\eal
where  $p$ is the pressure, $T$ is the temperature,  $\tensI$ is the unit tensor, $\tenska$ is the  thermal conductivity tensor, $\tensD$ is the  diffusion tensor, which along with $\tenska$ is assumed to be a symmetric and positive definite  2-tensor,   and $\mu$ is the chemical potential.  We allow  the possibility that  phenomenological quantities such as  $\tenska$ and $\tensD$ can  depend on the dynamical variables. Here, 
$\4tensLa$ is the viscosity 4-tenor, the usual rank 4 isotropic Cartesian tensor  given by
\bq
\La_{ijkl}=\et\left(\de_{il} \de_{jk}+ \de_{jl}\de_{ik} -\frac2{3} \de_{ij}\de_{kl} \right) + \ze\, \de_{ij}\de_{kl}\,,
\label{viscous}
\eq
with viscosity coefficients $\et$ and $\ze$ and $i,j,k$ and $l$ taking on values 1,2,3.  Note, we use boldface as in the fluxes ${\bfJ}_c, {\bfJ}_e$, and 
${\bfJ}_s$  to denote vectors, an over bar as in ${\tensJ}_{\mathbf{m}}$ to denote  rank-2  tensors, and a double over bar as in  $\4tensLa$ to denote rank-4 tensors.   A single ``$\,\cdot\,$" is used for neighboring contractions, e.g.,  $(\tensD\cdot \nabla \mu)_i=D_{ij} \p_j \mu$ and we use the double dot convention, e.g., for the stress tensor   
$(\4tensLa:\!\nabla \bfv)_{ij}=\La_{ijkl} \p_k v_l$, where repeated indices are summed over.

The volume density variables are $\psi=(\rho,\mathbf{m}\coloneqq\rho \mathbf{v}, \sigma \coloneqq \rho s, \Tilde{c} \coloneqq \rho c)$, where  $\mathbf{m}$ is the momentum density, $\sigma$ is entropy per unit volume and $\Tilde{c}$ is the concentration per unit volume. The local energy per unit volume is  given by 
\bq
e =  \frac{\, |\mathbf{m}|^2}{2\rho} + \rho u(\rho,s,c)\,.
\label{e}
\eq
From the specific  internal energy,   $u(\rho,s,c)$,  we have the thermodynamic relations
\begin{equation}
\mathrm{d}u = T \mathrm{d}s + \frac{p}{\rho^2} \mathrm{d}\rho + \mu \mathrm{d} c\,,
 \label{TH1}
\end{equation}
where
\bq
 T =  \frac{\partial u }{\partial s}\,,  \qquad p =  \rho^2 \frac{\partial u}{\partial \rho}\,,
 \quadd\mu =  \frac{\partial u}{\partial c} \,.
 \label{TH2a} 
\eq

Given the content of this section, we have established the first step of our algorithm for the GNS system, the determination of the dynamical variables $\psi=\{ \bfm=\rho\bfv, \rho, \tilde{c}=\rho  c, \si=\rho   s\}$ or alternatively the set $(\mathbf{v}, \rho,   c,s)$, and the second  step of our algorithm by making the choices of Hamiltonian $H$ of \eqref{hamH}  and entropy $S$ of \eqref{entS}.  In the next section, \S\S~\eqref{ssec:SIPbkt}, we proceed to the third step of the algorithm by obtaining the Hamiltonian structure for this system.  This system without dissipation is the GE system.

\subsection{Noncanonical Poisson bracket of the Gibbs-Euler system}
\label{ssec:SIPbkt}

Given that  the mixture is assumed to be confined in the domain $\Omega$, the Eulerian scalars (volume forms) $(\rho, \Tilde{c}, \sigma)$
are functions from space-time $\Omega \mapsto \mathbb{R} \rightarrow \mathbb{R}$, while the vector field $\bfm$ maps 
$\Omega \times \mathbb{R} \mapsto T\Omega$, where $T\Omega$ stands for the tangent bundle of the manifold $\Omega$.  {We will forgo formal geometric considerations and suppose our infinite-dimensional phase space has coordinates $\psi=(\bfm,\rho, \Tilde{c}, \sigma)$ and observables are  functionals  that map  $\psi\mapsto \R$ at each fixed time.  We will denote the space of such functionals by $\calb$.  Then a Poisson  bracket is  a bilinear operator $\calb\times \calb  \mapsto \mathbb{R}$ 
that fulfills the Leibniz rule and is a realization of a Lie algebra \citep[see e.g.][chap.\ 14]{Sudarshan}.  The Leibniz rule follows from that for the variational or functional derivative of $F\in \calb$, defined by
$$
\delta F[\psi; \delta \psi) = \lim_{\epsilon \rightarrow 0} \frac{F(\psi+\epsilon\delta \psi) - F(\psi)}{\epsilon}
=\int_{\Om} \frac{\de F}{\de \psi}\de \psi,
$$
where $\de F/\de\psi$ is the functional derivative.  This expression  can be viewed as the directional derivative of a functional $F$ at $\psi$ in the direction $\delta \psi$ 
\citep[see, e.g.,][for a formal review of these notions]{pjm98}.

The appropriate Poisson bracket, defined on two functionals $F,G\in \calb$,  for the GE system is the following:
\bal
\{F,G\} &= -  \int_{\Omega} \mathbf{m}\cdot \left[  F_{\bfm} \cdot \nabla G_{\bfm} - G_{\bfm}\cdot \nabla F_{\bfm}\right]
 \nn\\
&+ \rho\left[F_{\bfm}\cdot \nabla G_\rho  -G_{\bfm}\cdot \nabla F_\rho  \right]
 \nn\\
&+ \sigma \left[F_{\bfm}\cdot \nabla G_\si - G_{\bfm}\cdot\nabla  F_\si\right] 
 \nn\\
&+ \Tilde{c} \left[F_{\bfm}\cdot \nabla  G_{\Tilde{c}} - G_{\bfm}\cdot\nabla    F_{\Tilde{c}} \right] \,,
\label{PB1}
\eal
where we compactified our notation by defining $F_{\bfm}\!\!\coloneqq\de F/\de \bfm$, $F_\rho\!\!\coloneqq\de F/\de \rho$,   etc., the functional derivatives with respect to the various coordinates $\psi$. That this is the appropriate Poisson bracket is immediate;  it  is the Lie-Poisson bracket  originally given by \citet{pjmG80} with the addition of the last line of \eqref{PB1} involving the concentration, another volume density variable $\tilde{c}$.   Adding such a dynamical variable is common place in the fluid modeling of plasmas over the last decades and fits within the general theory for extension given by  \citet{pjmT00}.  By construction we have a Poisson bracket that is a bilinear, antisymmetric, and either by the extension theory or a relatively easy direct calculation using the techniques of \citet{pjm82} it can be shown to satisfy the Jacobi identity, i.e.,  
\bq
 \{\{F, G\}, H\} + \{\{H, F\}, G\} + \{\{G, H\}, F\} = 0\,, 
 \eq
for all $F, G, H\in\calb$.  The Leibniz property, which is required for the Poisson bracket to generate a vector  field,  is built into the definition of functional derivative.

Upon inserting any functional of $\psi$, say an observable $o$,  into the Poisson bracket  its evolution is determined by 
\bq
    \partial_t o= \{o, {H}\}\,, 
    \label{fH}
\eq
where the Hamiltonian functional is the total energy of the system, where we rewrite \eqref{hamH} as follows:
\bq
 {H}[\rho,\mathbf{m},\sigma,\Tilde{c}]= \int_{\Omega} \!e \, = \int_{\Omega}  \frac{\,|\mathbf{m}|^2}{2\rho} + \rho u\left(\rho,\frac{\sigma}{\rho},\frac{\Tilde{c}}{\rho}\right).
 \label{Hamc}
\eq
In \eqref{fH} and henceforth we use the shorthand $\p_t =\p/ \p t$. Using the following functional derivatives:
\bal
 H_\rho &= -{|\mathbf{v}|^2}/{2}+ u +  {p}/{\rho} -sT-c\mu\,, \quad H_\bfm = \mathbf{v}\,,
  \nn\\
  H_\si&= T\,, \quad  H_{\Tilde{c}} = \mu\,,
    \label{var1}
\eal
the bracket form of \ref{fH} gives the  ideal two-phase flow system
\bal
  \p_t \mathbf{v} &=\{\bfv,H\}= -\bfv\cdot\nabla \bfv -\nabla p/\rho\,,
   \label{pvt}
\\
  \p_t\rho &=\{\rho,H\}= -\bfv\cdot\nabla\rho   -\rho\, \nabla\cdot\mathbf{v}\,,
      \label{prt}
      \\
    \p_t \Tilde{c}  &=\{\tilde{c},H\}= -\bfv\cdot\nabla\tilde{c} -\Tilde{c}\,\nabla\cdot\mathbf{v}\,,
      \label{pct}
    \\
\p_t\si&=\{\si ,H\}= -\bfv\cdot\nabla\si  -\sigma\,\nabla\cdot\mathbf{v}\,.
      \label{pst}
\eal 
Here we have dropped surface terms arising from integration by parts and have used $\de \rho(\bfx)/\de \rho(\bfx') =\de(\bfx-\bfx')$. 
Equations \eqref{pvt}--\eqref{pct} can also be written easily using ,  e.g., ${\mathrm{D} \rho}/{\mathrm{D}t}\coloneqq {\partial \rho}/{\partial t} + \mathbf{v}\cdot\nabla \rho$. These equations comprise  the  GE system.

\bigskip

Casimir invariants are special functionals $\mathfrak{C}$ that satisfy 
\bq
\{F, \mathfrak{C}\}= 0 \qquad \forall F\in\calb\,,
\eq
{and thus are constants of motion for any Hamiltonian.} From \eqref{PB1} we obtain the following equations that a  Casimir functional $\mathfrak{C}$ must satisfy:
\bq
\nabla\cdot(\rho \,  \mathfrak{C}_{\bfm})=\nabla\cdot(\si \,  \mathfrak{C}_{\bfm})=\nabla\cdot(\tilde{c} \,   \mathfrak{C}_{\bfm})=0\,,
\eq
 and 
\bq
m_j \nabla \mathfrak{C}_{m_j} +\p_j(\bfm \, \mathfrak{C}_{m_j}) + \rho\, \nabla \mathfrak{C}_\rho + \si\, \nabla \mathfrak{C}_\si + \tilde{c} \, \nabla \mathfrak{C}_{\tilde{c}}=0\,,
\eq
where we  use  the shorthand $\de \mathfrak{C}/\de \bfm\coloneqq\mathfrak{C}_\bfm$, $\de \mathfrak{C}/\de \rho \coloneqq\mathfrak{C}_\rho$, etc.  and summation of repeated indices is assumed. For the purpose at hand we assume $\mathfrak{C}$ is independent of $\bfm$, yielding the single condition
\bq
\rho\, \nabla \mathfrak{C}_\rho + \si\, \nabla \mathfrak{C}_\si + \tilde{c} \, \nabla \mathfrak{C}_{\tilde{c}}=0\,.
\label{cascon}
\eq
Equation \eqref{cascon} is satisfied by 
\bq
\mathfrak{C}=\int_\Om \calc(\rho, \si, \tilde{c})
\eq
for any  $\calc$ that is  Euler homogeneous of degree one, i.e., satisfies
\bq
 \calc(\la \rho,\la  \si, \la\tilde{c})=  \la\,  \calc(\rho, \si, \tilde{c})\,.
 \eq
The proof of this is straightforward.
 
 To complete the third step of our algorithm, the  entropy functional must be chosen from the set of Casimir invariants.  Writing the Euler homogeneous integrand as 
 \bq
 \calc(\rho, \si, \tilde{c})= \rho f(\si/\rho,\tilde{c}/\rho)
 \nn
 \eq
 it is clear that 
 \bq
 S=\int_\Om \rho \, s=\int_\Om\! \si \, 
 \label{entsi}
 \eq
 lies in our set of Casimirs.  This quantity was first shown to be a Casimir for the ideal fluid in  \citep{pjm82} and used for the thermodynamically consistent Navier-Stokes metriplectic  system in \citep{pjm84b}.  We note in passing, for other theories that might have a nontraditional dynamical equilibrium playing the role of thermodynamic equilibrium,  one may wish to choose another Casimir.

\subsection{Metriplectic 4-bracket for the Gibbs-Navier-Stokes system}
\label{ssec:SI4bkt}

Now let us turn to our fourth and final step of the algorithm, construction of the metriplectic 4-bracket.  To this end we review the formalism of \cite{pjmU23} in general terms and then apply it  to the GNS system, which is a   generalization of an example given in that work.

\subsubsection{General metriplectic 4-bracket dynamics}
\label{sssec:gen4bkt}

The metriplectic 4-bracket theory was  introduced by \citet{pjmU23} to describe the dissipative dynamic. Let us briefly recall the metriplectic 4-bracket description in infinite dimensions. In this description, we consider the dynamics of classical field theories with multi-component fields
\bq
\chi(z, t)=\left(\chi^1(z, t), \chi^2(z, t), \ldots, \chi^M(z, t)\right)
\label{chi}
\eq
defined on $z=(z^1,z^2,\dots, z^N)$ for times $t \in \mathbb{R}$. Here we use $z$ to be a label space coordinate with the volume element $ d^N\!z$, but with the domain unspecified. In fluid mechanics this domain would be $\Omega$,  the 3-dimensional domain occupied by the fluid and recall we used $\bfx$ for the  coordinate of this point. In general we suppose that $\chi^1,\ldots,\chi^M$ are real-valued functions of $z$ and $t$.  Given the space of functionals of $\chi$, $\calb$, we define 4-bracket as an
operator 
\bq
(\,\cdot\,,\,\cdot\,\,;\,\cdot\,,\,\cdot\,)\colon \calb \times \calb \times \calb \times \calb \rightarrow \calb
\eq
 such that for any four functionals $F,K,G,N\in\calb$ we have 
\bal
(F, G ; K, N)&=  \int  \! \!  d^N\!z \!\! \int \! \! d^N\!z'\,\! \! \int  \! \!  d^N\!z'' \!\! \int \! \! d^N\!z'''\,  \hat{R}^{\al\be\ga\de}
\nn \\
& \times \frac{\delta F}{\delta \chi^\al(z)} \frac{\delta G}{\delta \chi^\be\left(z'\right)} \frac{\delta K}{\delta \chi^\ga\left(z''\right)} \frac{\delta N}{\delta \chi^\de\left(z'''\right)}\,, 
\label{2b}
\eal
where  $\hat{R}^{\al\be\ga\de}(z,z',z'',z''')$ is a 4-tensor functional operator with coordinate form given by the following integral kernel:
\bal
 \hat{R}^{\al\be\ga\de}(z,z', z'',z''')[\chi]&=
 \\
 &\hspace{-1.25cm}\hat{R}(\mathbf{d}\chi^\al(z), \mathbf{d}\chi^\be(z'),
 \mathbf{d} \chi^\ga(z''), \mathbf{d} \chi^\de(z''))[\chi(z)]\, ,
 \nn
\eal
where $\al,\be, \ga,\de$ range over $1,2,\dots, M$.
The 4-bracket is assumed to  satisfy the following proprieties:\\
(i) Linearity in all arguments, e.g, for all $\la\in\R$
\bq
(F + \la H, K; G, N) = (F, K; G, N) + \la (H, K; G, N)
\label{2B1}
\eq
(ii) The algebraic  symmetries
\bal (F, K; G, N) &= -(K, F; G, N)
\label{2B2}\\
(F, K; G, N) &= -(F, K; N, G) 
\label{2B3}\\
(F, K; G, N) &= (G, N; F, K)
\label{2B4} 
\eal
(iii) Derivation in all arguments, e.g.,
\bq(F H, K; G, N) = F(H, K; G, N) + (F, K; G, N)H\,.
\label{2B6}
\eq
Here, as usual, $FH$ denotes point-wise multiplication. In addition, to ensure entropy production we require 
\bq
\dot{S}= (S,H;S,H) \geq 0\,.
\label{SHSH}
\eq
 Metriplectic 4-brackets that satisfy \eqref{2B1}--\eqref{SHSH} are called {\it minimal metriplectic}. 
In \S\S~\ref{sssec:KNcon} we will give a construction that ensures such appropriate positive semidefiniteness.

The  minimal metriplectic properties of metriplectic 4-brackets are reminiscent of the  algebraic properties possessed by a curvature tensor. In fact, every Riemannian manifold naturally has a metriplectic 4-bracket, and  $(S,H;S,H)$ provides a notion of sectional curvature  \citep[see][]{pjmU23}. 

%

From the metriplectic 4-bracket \eqref{2b}, the dissipative dynamics of an observable $o$ is generated as follows:
\bal
\p_t o&=\left(o, H ; S, H\right) =  \int  \! \!  d^N\!z \!\! \int \! \! d^N\!z'\,\! \! \int  \! \!  d^N\!z'' \!\! \int \! \! d^N\!z'''\,   \hat{R}^{\al\be\ga\de}
\nn\\
& \times \frac{\delta o}{\delta \chi^\al(z)} \frac{\delta H}{\delta \chi^\be (z')} \frac{\delta S}{\delta \chi^\ga (z'')} \frac{\delta H}{\delta \chi^\de(z''')}\,.
\label{d4B}
\eal
If we choose $o$ to be the Hamiltonian $H$, then $\dot{H}=\left(H, H ; S, H\right)\equiv 0$ by the antisymmetry condition of \eqref{2B2}.  If we choose $o$ to be the entropy $S$, then $\dot{S}=\left(S, H ; S, H\right)\geq 0$ by \eqref{SHSH}.

The dissipative dynamics generated by 4-bracket on our set of field variables $\chi$ is given by
\bal
\p_t{\chi}^\al(z)&=\left(\chi^\al, H ; S, H\right) 
\nn\\
&= \!\! \int \! \! d^N\! z'' \,   G^{\al\be} (z,z'')\frac{\delta S}{\delta \chi^\be(z'')}\,,
\label{chidyn}
\eal
where the $G$-metric is given as follows:
\bal
G^{\al\ga}(z,z'')&\coloneqq \int \! \! d^N\!z'\,\! \! \int  \! \!  d^N\!z'' \,  R^{\al\be\ga\de}(z,z',z'',z''')
\nn\\
&\hspace{1cm}\times \frac{\delta H}{\delta \chi^\be(z')} \frac{\delta H}{\delta \chi^\de(z''')}\,.
\eal
For the full metriplectic dynamics we would add the Poisson bracket contribution to the above. 
Equation \eqref{chidyn} is written so as to show that it amounts to a gradient system with the entropy  $S$ as generator.

\subsubsection{General Kulkarni-Nomizu construction}
\label{sssec:KNcon}

We can easily create specific metriplectic  4-brackets that have the minimal metriplectic properties:  the requisite symmetries and the positive semidefiniteness $(S,H;S,H)$. We do this by using the Kulkarni-Nomizu (K-N) product \citep{Kulkarni,nomizu1971spaces}. See also \citet{fiedler03} for relevant theorems.   Consistent with the bracket formulation of  \eqref{2b}, we deviate from the  conventional  K-N product  by working on the dual. Given two symmetric operator fields, say $\sum$ and $M$, operating on the variational derivatives; we again  use the subscript notation when convenient, 
\[
F_{\chi}\coloneqq\frac{\delta F}{\delta \chi} = \left(\frac{\delta F}{\delta \chi^1},\frac{\delta F}{\delta \chi^2},\ldots,\frac{\delta F}{\delta \chi^M}\right)\,,
\] 
the K-N product is defined as follows: 
\bal
  (\Sigma \wedge M)\left(dF,dK, dG,dN\right) 
&= \,\Sigma\left(dF, dG \right) M\left(dK, dN\right)
\nn\\
&\hspace{-.5cm} -\Sigma\left(dF , dN \right) M\left(dK , dG \right) \nn\\
&\hspace{-.5cm}+M\left(dF , dG \right) \Sigma\left(dK , dN \right)
\nn\\
&\hspace{-.5cm} -M\left(dF , dN \right) \Sigma\left(dK , dG \right)\,.
\eal
A finite-dimensional form of $\Si$ would be a  symmetric contravariant 2-tensor, say $\ga$,  and this would give the term
\bq
\ga(df,dg)=\ga^{ij}\frac{\p f}{\p z^i} \frac{\p g}{\p z^j}\,.
\label{sigmaf}
\eq
A conventional for K-N product would involve rank 2 covariant tensors. 
The form of \eqref{sigmaf}  suggests a general form  in infinite dimensions would be
\bq
\Si(dF,dG)= \!\!\int  \! \!  d^N\!z \!\! \int \! \! d^N\!z'\, \Si^{\al\be}(z,z') \frac{\delta F}{\delta \chi^\al(z)} \frac{\delta G}{\delta \chi^\be( z')}
\label{sigma}\,,
\eq
where $\Si^{\al\be}(z,z')$ is symmetric in both $\al,\be$ and $z, z'$ and operates to the right on both functional derivatives.  For example, 
\bq
\Si^{\al\be}(z,z')= L_{ab}^{\al\be}(z,z')\call^a \call'^b \,,
\eq
where $L_{ab}^{\al\be}$ is symmetric and $\call^a$ is a differential operator.  This implies, e.g., 
\bal
\Si(dF,dG)&= \int  \! \!  d^N\!z \!\! \int \! \! d^N\!z'\, L_{ab}^{\al\be}(z,z')
\nn\\
& \hspace{1cm}\times \call^a\frac{\delta F}{\delta \chi^\al(z)}\, \call'^b \frac{\delta G}{\delta \chi^\be(z')}\,.
\eal
With an expression  for $M$ similar to \eqref{sigma},  a term in the K-N decomposition would have the following form:
\bal
&\int  \! \!  d^N\!z \!\! \int \! \! d^N\!z'\,\! \! \int  \! \!  d^N\!z'' \!\! \int \! \! d^N\!z'''\, \Si^{\al\be}(z,z')\, M^{\ga\de}(z'',z''')
\nn\\
&\qquad\qquad  \times \  \frac{\delta F}{\delta \chi^\al(z)} \frac{\delta G}{\delta \chi^\be(z')}\,
 \frac{\delta K}{\delta \chi^\ga(z'')} \frac{\delta N}{\delta \chi^\de(z''')}\,,
\eal
which could be generalized further  by adding filtering kernels.

 It is easy to see that brackets constructed with this K-N product will have  all of the algebraic symmetries described in \S\S~\ref{sssec:gen4bkt}. In addition, it is shown in appendix \ref{A:seccurve} using the Cauchy-Schwarz inequality that positivity of $(S,H;S,H)$ is satisfied, if both $\Si$ and $M$ are positive semidefinite. Moreover, if one of $\Si$ or $M$  is positive definite,  defining an inner product, then the sectional curvature of (37) satisfies $(S,H;S,H)\geq 0$ with equality if and only if  $\de S/\de \chi \propto \de H/\de \chi$.  Thus, it is not difficult  to build minimal metriplectic 4-brackets.  
 
{Alternative to \eqref{sigma} we can define $\Si(dF,dG)$ pointwise  as
\bal
\Si(dF,dG)(z)&\coloneqq\! \int \! \! d^N\!z'\, \Si^{\al\be}(z,z') \frac{\delta F}{\delta \chi^\al(z)} \frac{\delta G}{\delta \chi^\be (z')}
\nn\\
&=  A^{\al\be}(z) \frac{\delta F}{\delta \chi^\al(z)} \frac{\delta G}{\delta \chi^\be (z)}
\label{sigma2}\,,
\eal
which could follow from \eqref{sigma} if we added an  additional argument to $\Si$.  Then,  with a corresponding form for $M$ the algebraic curvature symmetries would be induced in the integrand. This is the case for our  present purposes,  where we assume the}  specific K-N form given in \cite{pjmU23}, viz.\  where the  4-bracket is given by
\bq
(F, K ; G, N)=\!\!\! \int d^N\!z\, W\,(\Sigma \wedge M)\left(dF , dK , dG , dN \right)\,,
\nn
\eq
where  $W$ is an arbitrary weight, possibly depending on $\chi$ and $z$, that multiplies $(\Sigma \wedge M)$ where all of the functional derivatives are evaluated a the same point, $z$.  (See \eqref{M} and \eqref{Sgma} below.)
 In \S\S~\ref{sssec:GNS} we will see that this form  of 4-bracket is sufficient for the CHNS system of interest.

\subsubsection{Metriplectic 4-bracket for  the GNS system}
\label{sssec:GNS}

Now suppose our multi-component field variable $\chi$ is that for the  multiphase fluid, i.e., 
\bq
\psi(\bfx,t) = (\bfm(\bfx,t),  \rho(\bfx,t),\Tilde{c}(\bfx,t),{\sigma}(\bfx,t))
\label{psim0}
\eq
and  consider a  specific,  but still quite general, form of the K-N construction, one  adaptable to the GNS type of system.  For multi-component fields $\psi$ of our fluid we could choose
\bal
M(dF ,dG ) &=  F_{\psi^{\ga}}A^{\ga\de}G_{{\psi^{\de}}}\,,
\label{M}
\\
\Sigma (dF ,dG ) &=  
 \nabla F_{\psi^{\alpha}}\cdot B^{\alpha\beta}  \cdot \nabla G_{{\psi^{\beta}}} \,,
\label{Sgma}
\eal
where repeated indices are to be summed, $A^{\ga\de}=A^{\de\ga}$,  and in coordinates $B_{ij}^{\alpha\beta}$ is symmetric in both $i,j=1,2,3$ and  $\al,\be=1,\dots,6$, which are  indices that range over the six fields of $\psi$.  Here, the  nablas are contracted on $i$ and $j$.  
With the choices of \eqref{M} and \eqref{Sgma},  the metriplectic 4-bracket is 
\bal
(F,K;G,N) &=  \int_{\Omega}
\nabla F_{\psi^{\alpha}}\cdot B^{\alpha\beta} \cdot \nabla G_{{\psi^{\beta}}} \,  K_{\psi^{\ga}}A^{\ga\de} N_{{\psi^{\de}}} 
\nn\\
&- \nabla F_{\psi^{\alpha}}\cdot B^{\alpha\beta} \cdot \nabla N_{{\psi^{\beta}}}
\,  K_{\psi^{\ga}}A^{\ga\de}G_{{\psi^{\de}}}
\nn\\
& +  \nabla K_{\psi^{\alpha}}\cdot  B^{\alpha\beta} \cdot \nabla N_{{\psi^{\beta}}}
\, F_{\psi^{\ga}}A^{\ga\de}G_{{\psi^{\de}}} 
\nn\\
&- \nabla K_{\psi^{\alpha}}\cdot  B^{\alpha\beta} \cdot \nabla G_{{\psi^{\beta}}}
\,F_{\psi^{\ga}}A^{\ga\de}N_{{\psi^{\de}}}\,.
\label{ML4bkt}
\eal
Entropy production is governed by
\bal
(S,H;S,H) &=  \int_{\Omega}
\nabla S_{\psi^{\alpha}}\cdot B^{\alpha\beta} \cdot \nabla S_{{\psi^{\beta}}} \,  H_{\psi^{\ga}}A^{\ga\de}H_{{\psi^{\de}}} 
\\
&+  \nabla H_{\psi^{\alpha}}\cdot  B^{\alpha\beta} \cdot \nabla H_{{\psi_{\beta}}}
\, S_{\psi^{\ga}}A^{\ga\de}S_{{\psi^{\de}}} 
\nn\\
&- 2\,\nabla H_{\psi^{\alpha}}\cdot  B^{\alpha\beta} \cdot \nabla S_{{\psi_{\beta}}}
\,S_{\psi^{\ga}}A^{\ga\de}H_{{\psi^{\de}}}\,.
\nn
\eal
From the general results of appendix \ref{A:seccurve} it follows that $(S,H;S,H) \geq 0$ for this special case if $A^{\alpha\beta}$ and $B^{\alpha\beta}$ are positive semidefinte. 

Observe that \eqref{Sgma} could be replaced by the more general expression
\bq
\Sigma (dF ,dG ) =  
 \call F_{\psi^{\alpha}}\cdot B^{\alpha\beta}  \cdot \call G_{{\psi^{\beta}}} 
\label{Sgma2}\,,
\eq
where $\call$ is contained within a general class of pseudodifferential operators.  Later we  will  see an example of this.

Now consider an even more restrictive K-N product, a  special case of \eqref{chns4} with  what appears to be the simplest K-N options.  As discussed earlier, we do not expect our  4-bracket to depend on functional derivatives with respect to $\rho$, which could  produce density diffusion.  Thus, for  $M$ we take 
\bq
M(dF ,dG ) = F_{{\sigma}}G_{{\sigma}}.
\label{MSI}
\eq
The placement of the $\nabla$ in \eqref{chns4} leads to a diffusive type of relaxation, so this is natural, and the simplest case would be to select $\Si$ with no cross terms, i.e., 
\bal
{\Sigma}(dF ,dG ) &=  \nabla F_{{\bfm}} :\4tensLa_1 : \nabla G_{{\bfm}} + \nabla F_{{\sigma}} \cdot \tensLa_2\cdot  \nabla G_{{\sigma}} 
\nn\\
&\hspace{2cm}+ \nabla {F}_{\Tilde{c}}  \cdot \tensLa_3 \cdot  \nabla {G}_{\Tilde{c}} \,,
\label{SiSI}
\eal
where the 4-tensor $\4tensLa_1$ and the symmetric 2-tensors $\tensLa_2$ and $\tensLa_3$  are  to be determined. We make  the following choices
\bq
\4tensLa_1 = \frac{\4tensLa}{T},\qquad \tensLa_2 = \frac{\tenska}{T^2},\qquad \tensLa_3 = \frac{\tensD}{T}\,,
\label{TensorsLa}
\eq
where $\4tensLa$ is the isotropic Cartesian 4-tensor given by \eqref{viscous},  $\tensLa_{2,3}$ are symmetric positive definite 2-tensors defined by the previously introduced $\tenska$ and $\tensD$.  We take the weight $W$ to be the Lagrange multiplier defined in \S\S~\ref{sssec:genTh} i.e. $W =1$. Then, the 4-bracket reads
\bal
(F, K ; G, N)&= 
\label{4BSICHNS}
\\
&\hspace{-1.45cm} \int_{\Omega}\!\frac{1}{T}\Big[
\left[K_\sigma \nabla   F_{\mathbf{m}}-F_\sigma \nabla   K_{\mathbf{m}}\right]
:\4tensLa :
\left[N_\sigma \nabla   G_{\mathbf{m}}-G_\sigma \nabla   N_{\mathbf{m}}\right]
\nn\\
&\hspace{-1.0cm} +\ \frac{1}{T}\,
\big[K_\sigma \nabla  F_{{\sigma}}-F_\sigma \nabla  K_{{\sigma}}\big]
\cdot \tenska \cdot \big[N_\sigma \nabla  G_{{\sigma}}-G_\sigma \nabla  N_{{\sigma}}\big]
\nn\\
&\hspace{-1.0cm} + 
\big[K_\sigma \nabla F_{\Tilde{c}}  -F_\sigma  \nabla K_{\Tilde{c}}\big]\cdot
\tensD\cdot \big[N_\sigma \nabla G_{\Tilde{c}} - G_\sigma \nabla N_{\Tilde{c}}\big]
\Big]\,.
\nn
\eal

Upon insertion of  $H$ as given by \eqref{Hamc}  and $S$  from the set of Casimirs of \S\S~\ref{ssec:SIPbkt} to be as  in  \eqref{entsi}, the  dynamics is given by
\bq
\p_t\psi^\al = \{\psi^\al, H\} + (\psi^\al,H;S,H)\,. 
\eq
Using  ${H}_{\bfm} = \bfv$, ${H}_{\sigma} = T$, and ${S}_{\sigma} = 1$, the following GNS system is produced:
 \bal
    \p_t \mathbf{v}  &=\{\bfv,H\}+ (\bfv,H;S,H)
    \nn\\
    &= -\bfv\cdot\nabla \bfv -\nabla p/\rho
   + \frac1{\rho}\nabla\cdot(\4tensLa:\nabla \mathbf{v})\,,
   \label{Dpvt}
\\
 \p_t \rho  &=\{\rho,H\}+ (\rho,H;S,H)
 \nn\\
 &= -\bfv\cdot\nabla\rho   -\rho\, \nabla\cdot\mathbf{v} \,,
      \label{Dprt}
\\
    \p_t \Tilde{c} &=\{\tilde{c},H\}+ (\tilde{c},H;S,H)
    \nn\\
    &= -\bfv\cdot\nabla\tilde{c} -\Tilde{c}\,\nabla\cdot\mathbf{v}
 + \nabla\cdot(\tensD\cdot \nabla\mu)\,,
      \label{Dpct}
          \\
 \p_t \si &=\{\si ,H\}+ (\si,H;S,H)
 \nn\\
 &= -\bfv\cdot\nabla\si 
     - \sigma\,\nabla\cdot\mathbf{v}
       \nn\\ 
       &\qquad + \nabla\cdot\left(\frac{\tenska}{T}\cdot\nabla T\right) + \frac{1}{T^2}\nabla T\cdot\tenska \cdot \nabla T 
       \nn\\
       &\qquad + \frac{1}{T}\nabla\mathbf{v}:\4tensLa :  \nabla\mathbf{v} + \frac{1}{T} 
       \nabla\mu\cdot \tensD\cdot \nabla\mu \,.
      \label{Dpst}
\eal

By construction we automatically have energy conservation, i.e., for  \eqref{Hamc}  $\dot H=0$,   and entropy production 
\bal
\dot{{S}}&= ({S}, {H} ; {S}, {H})
\nn\\
&= \int_\Om \frac{1}{T}\bigg[ \nabla\mathbf{v}: \4tensLa :\nabla\mathbf{v} + \frac{1}{T}\nabla T \cdot \tenska\cdot \nabla T
\nn\\
& \hspace{3.10cm}+   \nabla\mu\cdot \tensD\cdot \nabla\mu \bigg] \geq 0\,.
\label{dots2}
\eal

\subsection{GNS metriplectic 2-bracket  and conventional  fluxes and affinities}
\label{ssec:SI2bkt}

For completeness we demonstrate two things in this subsection:  how the metriplectic 4-bracket formalism relates to the original binary metriplectic formalism  given in \citet{pjm84,pjm84b,pjm86} and how it relates to conventional nonequilibrium thermodynamics, making the connection between the 4-bracket K-N construction and the phenomenology of thermodynamics fluxes and affinities (sometimes called  thermodynamic forces).

\subsubsection{GNS metriplectic 2-bracket}
\label{sssec:genTh}

As noted above  we are concerned with the metriplectic dynamics introduced in \citet{pjm84,pjm84b,pjm86} \citep[see also][]{pjm09a,pjmC20}, but we mention that other binary brackets for describing dissipation were presented over the years  \citep[e.g.][]{pjmK82,kaufman1984dissipative, pjmH84,grm84,og97,Beris94,Edwards98}. In addition we mention a recent  alternative  approach to multiphase fluids, one based on constrained variational principles,  that is given in \citet{Eldred20}.  We refer the reader to \citet{pjmU23}  for comparisons with other formulations and how they emerge from the metriplectic 4-bracket.  

Metriplectic dynamics was introduced  as a means of building thermodynamically consistent theories in terms of a binary bracket, which we now call the metriplectic 2-bracket.  The theory applies to a wide class of dynamical systems, including both ordinary and  partial differential equations.  Evolution of an observable $o$ using  the metriplectic 2-bracket  has the following form:
\begin{eqnarray}
  \p_t o   = \{o,\calf\} - (o,\calf)_H\,,
    \label{2bkto}
\end{eqnarray}
where as before  $\{\,.\,\}$  is the noncanonical Poisson bracket that generates the ideal part of the dynamics, while now $(F,G)_H$, the metriplectic 2-bracket,  generates the dissipative part.  The functional $\calf$ represents  the global Helmholtz free energy of the system, and is given by:
\begin{eqnarray}
\calf =  {H} - \mathcal{T} {S}  \,,   
\end{eqnarray}
where again ${H}$ is the Hamiltonian and  ${S}$ the entropy selected from the set of Casimirs of the noncanonical Poisson bracket (ensuring  $\{F,{S}\} = 0 \,$ for any functional $F$), and $\mathcal{T}$ is a  uniform nonnegative constant (a global temperature). The  metriplectic 2-bracket  $(\,,\,)$ is assumed to be bilinear, symmetric, and  satisfies
\begin{eqnarray}
    (F,{H})_H \equiv 0 \quad\text{ for any functional }F.
\end{eqnarray}
Thus,  metriplectic systems are thermodynamically consistent:
\begin{description}
    \item[First law] \ 
     (energy conservation):
    \bal
    \dot{H} &= \{ {H}, \calf\} -  ( {H}, \calf)_H
    \nn\\
    &= \{ {H}, {H}\} + \mathcal{T}( {H}, {S})_H = 0\,;
      \label{engcon}
      \eal
    \item[Second law]\ (entropy production):
\bal
  \dot{S} &=
  \{ S, \calf\} -  ( S, \calf)_H
  \nn\\
  &=  -( {S}, {H})_H + \mathcal{T}( {S}, {S})_H = \mathcal{T}( {S}, {S})_H \geq 0\,,
\label{entprod}
\eal
\end{description}
which follows because $ \{ S, \calf\}\equiv 0$ and $( {S}, {H})_H\equiv 0$.  As shown in \citet{pjmU23} the metriplectic 2-bracket emerges from the 4-bracket as follows:
\bq
(F, {G})_{ {H}}=(F,{H}; {G}, {H})\,, 
\eq
where for convenience here and henceforth  we set $\mathcal{T}=1$. Because of the minimal metriplectic properties of the 4-bracket, we are assured to have the thermodynamic consistency of \eqref{engcon} and \eqref{entprod}. 

The 2-bracket that emerges from the general 4-bracket of \eqref{ML4bkt} is the following:
\bal
(F,G)_H &= (F,H;G,H)
\nn\\
 &=  \int_{\Omega}
\nabla F_{\psi^{\alpha}}\cdot B^{\alpha\beta} \cdot \nabla G_{{\psi^{\beta}}} \,  H_{\psi^{\ga}}A^{\ga\de}H_{{\psi^{\de}}} 
\nn\\
&- \nabla F_{\psi^{\alpha}}\cdot B^{\alpha\beta} \cdot \nabla H_{{\psi^{\beta}}}
\,  H_{\psi^{\ga}}A^{\ga\de}G_{{\psi^{\de}}}
\nn\\
&+  \nabla H_{\psi^{\alpha}}\cdot  B^{\alpha\beta} \cdot \nabla H_{{\psi_{\beta}}}
\, F_{\psi^{\ga}}A^{\ga\de}G_{{\psi^{\de}}} 
\nn\\
&- \nabla H_{\psi^{\alpha}}\cdot  B^{\alpha\beta} \cdot \nabla G_{{\psi_{\beta}}}
\,F_{\psi^{\ga}}A^{\ga\de}H_{{\psi^{\de}}}\,,
\label{chns4}
\eal
which  in light of the K-N product satisfies $(F,H)_H=0$ for all $F$.  This will be true for any choice of the Hamiltonian $H$.  Indeed, a special case of this was used in \eqref{4BSICHNS} to obtain the thermodynamically consistent set of equations \eqref{Dpvt}-\eqref{Dpct}. Recall, for this case $M$ and $\Si$ were chosen as in \eqref{MSI} and \eqref{SiSI}, special cases of \eqref{M} and \eqref{Sgma}.

Another 2-bracket can be obtained from  \eqref{ML4bkt} by making  a convenient choice of variables;  viz.,   instead of  the variables $\psi$ of \eqref{psim} in \eqref{chns4} we choose the  following:
\bq
\xi(\bfx,t) \coloneqq (\bfm(\bfx,t),\rho(\bfx,t),\Tilde{c}(\bfx,t),e(\bfx,t))\,,
\label{psim1}
\eq
where the  total energy density is used  instead of the entropy density as one  of our dynamical variables.  That this is possible is well known in thermodynamics  because the entropy must be a monotonic increasing function of the internal energy, which allows via the inverse function theorem   transformation between the extensive energy  or  extensive entropy representations \citep{callen}. We will denote these density variables in order by $\xi_\al$, $\al=1, \dots,6$. With this choice the Hamiltonian   \eqref{Hamc} is given by
 \bq
H = \int_{\Omega} e=\int_{\Omega} \xi_6\,,
\eq
 $\de H/\de \xi_\al= \de_{\al 6}$, the Kronecker delta, and $\nabla\de H/\de \xi_\al\equiv 0$.  Thus, \eqref{chns4} reduces to 
\bq
(F,G)_H =  (F,H;G,H)
 =  \int_{\Omega}
\nabla F_{\xi_{\alpha}}\cdot L_{\alpha\beta} \cdot \nabla G_{{\xi_{\beta}}} \,   ,
\label{2bktxi}
\eq
where without loss of generality we set $A^{66}=1$ and    $B^{\al \be}= L_{\alpha\beta}$ for this case.   In \S\S~\ref{ssec:fandf} we will follow  \citet{pjmC20} and show how the bracket of \eqref{2bktxi} fits into the framework of conventional  nonequilibrium thermodynamics as, e.g.,  described in \citet{de2013non}. In this way we will physically identify the meaning of  $L_{\al\be}$.

\subsubsection{Fluxes and affinities for  the GNS system}
\label{ssec:fandf}

A fundamental equations of nonequilibrium thermodynamics is the general thermodynamic identity 
\bq
    \text{d}\sigma = X^{\alpha} \text{d} \xi_{\alpha},
    \label{sgmxi}
\eq
which relates  $\sigma$,  the entropy density, to  the  $\xi_{\alpha}$ densities associated with conserved extensive properties and to  $X^{\alpha} \coloneqq \partial \sigma /\partial \xi_{\alpha}$, quantities called affinities (or thermodynamic forces).  All the densities are characterized by the following conservation equations:
\bq
    \partial_t \xi_{\alpha} + \nabla \cdot\mathcal{J}_{\alpha} = 0\,,
\eq
where $\mathcal{J}_{\alpha}$ is at  present an unknown flux associated with the density $\xi_{\alpha}$. Then, the evolution of the entropy is given by
\bq
    \partial_t \sigma + \nabla\cdot \left(
    X^{\alpha} \mathcal{J}_{\alpha} \right) =   \mathcal{J}_{\alpha}\cdot\nabla X^{\alpha} \,.
    \label{split} 
\eq
 The righthand side of \eqref{split} is the  dissipative term, which is the sum of  the  fluxes $\mathcal{J}_{\alpha}$  contracted with  $\nabla X^{\alpha}$. The linear assumption of nonequilibrium processes amounts to relating fluxes and affinities according to 
\bq
    \mathcal{J}_{\alpha} = L_{\alpha \beta}   X^{\beta}\,.
  \label{fluxfo}
\eq
If we identify the $L_{\al\be}$ of \eqref{fluxfo} with that of \eqref{2bktxi}, we see how metriplectic brackets are related to the flux-affinity relations. Onsager symmetry, assumed to arise from microscopic reversibility, amounts to the symmetry $ L_{\alpha \beta} = L_{\be \alpha}$ and the semi-definiteness property assures the second law, i.e., entropy growth. It remains to identify how the fluxes of \S\S~\ref{ssec:SIDescription} enter the picture.

To further identify the meaning of $L_{\al\be}$ we revisit the thermodynamics of \eqref{TH1},  in light of our choice of the variables  $\xi$ of \eqref{psim}.  Thus, we rewrite the  thermodynamic relation \ref{TH1} upon changing variables, 
\begin{equation}
    T \text{d}\sigma = \text{d} e - \mathbf{v}\cdot\text{d}\mathbf{m} - \mu\text{d}\Tilde{c} 
    - g\text{d}    \rho\,,
    \label{tdsgibbs}
 \end{equation} 
where $e$ is  the energy density of  \eqref{e} and  
$g$ is a modified specific Gibbs free energy, viz. 
\begin{equation}
g:= u - Ts +  {p}/{\rho} - \mu c -  {|\mathbf{v}|^2}/{2}\,.
\end{equation}
We have assumed in \eqref{glbM} that there is no flux associated with $\rho$, i.e., in CHNS  chemical reactions and/or particle creation and annihilation are ignored.  Thus, the
phase space for the thermodynamics is smaller than that for the Hamiltonian dynamics,  because the variable $\rho$  as seen e.g. in \eqref{Dprt} has no dissipative terms.  This leads us to focus on the thermodynamic variables $(e, \mathbf{m},\Tilde{c})$ and \eqref{tdsgibbs} reduces to 
\begin{equation}
    T \mathrm{d}\sigma = \mathrm{d}e - \mathbf{v}\cdot\mathrm{d}\mathbf{m} -\mu \mathrm{d} \Tilde{c}\,.
    \label{TH2}
\end{equation}

Comparison of \eqref{sgmxi} and \eqref{TH2} suggests we require the  affinities associated with  $\mathbf{m}$, $e$, and $\Tilde{c}$.
The conventional choices for these affinities are $\nabla(1/T)$, $\nabla(-\bfv/T)$, and $\nabla(-\mu/T)$, respectively \citep{de2013non}.  However, examination of  \eqref{entropy1} or \eqref{dots2} suggests using instead $\nabla T$, $\nabla\bfv$, and $\nabla\mu$, as was done in \citet{pjmC20}.

The relationship between the  flux-affinity relations in terms of these two choices of bases are given by the following:
\bal \bar{J}_{\mathbf{m}}&=L_{\bfm e}\cdot \nabla \left({\frac{1}{T}}\right)+L_{\bfm\bfm} :\nabla \left(\frac{-\mathbf{v}}{T}\right) 
+L_{\bfm\Tilde{c}} \cdot \nabla \left({\frac{-\mu}{T}} \right)
\nn\\
& \hspace{.5 cm} = - \4tensLa : \nabla \mathbf{v}\,, 
\label{L1}
\\
{\bfJ}_{e}&=L_{ee}\cdot \nabla \left(\frac{1}{T}\right) +L_{e \bfm}:  \nabla\left(\frac{-\mathbf{v}}{T}\right) 
+L_{e\Tilde{c}}\cdot \nabla\left(\frac{-\mu}{T}\right)
\nn\\
& \hspace{.5 cm}= -\mathbf{v} \cdot (\4tensLa : \nabla \mathbf{v}) - \tenska\cdot \nabla T - \mu \tensD\cdot \nabla \mu\,, 
\label{L2}\\
 {\bfJ}_{\Tilde{c}}&=L_{\Tilde{c} {e}} \cdot \nabla\left(\frac{1}{T}\right) +L_{\Tilde{c} \bfm} : \nabla\left(\frac{-\mathbf{v}}{T}\right) 
 +L_{\Tilde{c}\Tilde{c}}\cdot  \nabla \left(\frac{-\mu}{T}\right)
 \nn\\
 & \hspace{.5 cm} = -\tensD\cdot \nabla \mu 
 \label{L3}\,.
 \eal
 Recall $\bar{J}_\bfm$ is a 2-tensor, thus $L_{\bfm e} =L_{e\bfm} $ is a 3-tensor, $L_{\bfm\bfm}$ is a 4-tensor, and $L_{\bfm\Tilde{c}}=L_{\Tilde{c}\bfm}$ is a 3-tensor.  Since $\bfJ_e$ and $\bfJ_{\tilde c}$ are vectors,   $L_{ee},  L_{{e}\Tilde{c}}=L_{\Tilde{c}e}$,  and $L_{\Tilde{c}\Tilde{c}}$ are 2-tensors. 
 From \eqref{L1},  \eqref{L2}, and  \eqref{L3},  we identify the components of $L_{\al\be}$ as follows:
\bal
    &L_{\mathbf{m}e} = T \4tensLa \cdot \mathbf{v}\,,\ \ \ 
    L_{\mathbf{m}\mathbf{m}}  = T \4tensLa\,,\ \ \ 
    L_{\mathbf{m}\Tilde{c}}  = 0\,,\ \ \  
    L_{{e}\Tilde{c}}  = T\mu \tensD'\,, 
  \nn\\ 
    &L_{ee} = T^2\tenska  + T \mathbf{v}\cdot\4tensLa\cdot \mathbf{v} + T\mu^2 \tensD\,, \quad 
L_{\Tilde{c}\Tilde{c}}  = T \tensD\,.
\label{tensorL2}
\eal
The metriplectic 2-bracket in terms of the $\xi$-variables is given by 
\bal
     (F,G)_H &=  \int_{\Omega}T\, \Big[ \nabla F_{e} \cdot (T\tenska  + \mathbf{v}\cdot\4tensLa\cdot \mathbf{v} + \mu^2 \tensD)\cdot \nabla G_{e} 
     \nn \\
     &+  \nabla F_\bfm: \4tensLa :\nabla G_\bfm + \nabla F_{\Tilde{c}} \cdot \tensD\cdot \nabla G_{\Tilde{c}} 
     \nn\\
    & + \nabla F_e\cdot (\4tensLa \cdot \mathbf{v}) : \nabla  G_\bfm +\nabla G_e \cdot (\4tensLa\cdot \mathbf{v}): \nabla  F_\bfm
   \nn \\
    & + \mu\,\nabla F_e \cdot  \tensD \cdot \nabla  G_{ \Tilde{c}} 
    + \mu\, \nabla G_e \cdot  \tensD \cdot\nabla F_{\Tilde{c}}\Big]\,.
    \label{Dissbracket}
\eal
 Upon writing
 \bq
 S=\int_\Om \si(\rho, e, \Tilde{c},\bfm)
 \eq
 and using standard thermodynamic manipulations we obtain
    \begin{eqnarray}
     S_e =  {1}/{T}\,,\ \  S_\bfm = -{\mathbf{v}}/{T}\,,\ \ \mathrm{and}\ \    {S}_{\Tilde{c}} = - {\mu}/{T}\,. 
    \end{eqnarray}
  Inserting these  into $(o,S)_H$ using the 2-bracket of  \ref{Dissbracket} yields the dissipative terms of \eqref{Dpvt}, \eqref{Dpct}, and with the manipulations of transforming from $e$ to $\si$, those of the entropy equation  \eqref{Dpst}.  By direct calculation, as well as by construction, we obtain $(H,S)_H=0$  and $\dot{S}=  ( {S}, {S})_H = (S,H;S,H)\geq 0$ which reproduces \eqref{dots2}.

To close the circle we transform the bracket of \ref{Dissbracket} in terms of the variables $(e,\mathbf{m},\Tilde{c})$ to one in terms of  $(\sigma,\mathbf{m},\Tilde{c})$ via the following chain rule formulas:
\bal
G_e &\rightarrow  G_{\sigma}/T,\qquad G_\bfm \rightarrow G_\bfm -\mathbf{v}\,,  G_\si/ {T}\,,
\nn\\
& \quad G_{\Tilde{c}} \rightarrow G_{\Tilde{c}}  -{\mu} \,   G_\si /T   \,.
\eal
This calculation gives precisely the bracket of \eqref{4BSICHNS}.

\section{The Cahn-Hilliard-Navier-Stokes system}
\label{sec:DICHNS}

 {Now we apply our algorithm to obtain the Cahn-Hilliard-Navier-Stokes system (CHNS) which allows for diffuse-interfaces. 
 We follow the steps in order,  just as in \S~\ref{sec:GNS}. However, here we have the additional step of aligning the desired entropy functional with the Poisson bracket, so that it is indeed a Casimir invariant.}

\subsection{Hamiltonian and Entropy functional forms}
\label{ssec:chnsHS}

The phenomenon of material transport along an interface is known as the Marangoni effect. The presence of a surface tension gradient naturally induces the migration of particles, moving from regions of low tension to those of high tension. This gradient can be triggered by a concentration gradient (or also a temperature gradient). In two-phase theory the interface between phases is regarded as being diffuse. According to the work of \citet{taylor-cahn98}, one can model the diffuse interface by a single order parameter, say $\phi$ and with a free energy functional,  
\bq
\mathfrak{F}= \int_{\Omega} \frac{\epsilon}{2}\, \Gamma^2(\nabla \phi) + \frac{1}{\epsilon} V(\phi)\,,
\eq
with $\Gamma$ being a homogeneous function of degree one, further details on this will be provided later. Here $V$ can be any non-negative function that equals zero at $\phi = \pm1$ and  $\epsilon$ is a small parameter that goes to zero in the sharp-interface limit. We choose the order parameter $\phi$ to be the concentration.

In the isotropic surface energy case \citet{Guo} develop a phase-field model for two-phase flow, which is thermodynamically consistent. The modeling based on a non-classical choice of energy  and entropy, given respectively by  
\bal
H_{GL} &= \int_{\Omega} \frac{\rho}{2}\,   |\bfv|^{2} + \rho u(\rho,s,c) + \frac{\rho}{2} \lambda_u|\nabla c|^2\,,
\\
S_{GL} &= \int_{\Omega} \rho s + \frac{\rho}{2} \lambda_s|\nabla c|^2\,, 
\eal
where $u$ and $s$ stand for the classical  specific internal energy and entropy, respectively,   while  the coefficients $\lambda_s$ and $\lambda_u$ are constant parameters. 

Alternatively,  \citet{anderson2000phase} propose a model of phase-field of solidification with convection, the model permits the interface to have an anisotropic surface energy. The choice of energy and entropy are given by
\bal
H_{AMW} &= \int_{\Omega} \frac{\rho}{2}\, |\bfv|^2 + \rho u(\rho,s,c) 
+  \frac{\epsilon_E^2}{2}\Ga^2(\nabla c)\,,
\\
S_{AMW}&=\int_\Om \rho s - \frac{\ep_S^2}{2} \Ga^2(\nabla c) \,,
\eal
where the coefficients $\epsilon_S$ and $\epsilon_E$ are assumed to be constant and $\Ga$ is a homogeneous function of degree one that takes a vector to a real number.

In this section, we  explore a choice of energy and entropy functionals, from which the previously mentioned choices are special cases, and we consider the associated free energy functional, viz.
\bal
H^a &= \int_{\Omega} \frac{\rho}{2} |\mathbf{v}|^2 + \rho u  + \frac{\rho^{a} }{2} \lambda_{u} \Gamma^2(\nabla c) 
 \eqqcolon \!\int_{\Omega} e_{\mathrm{Total}}^a\,,
\label{TE1}
\\
\label{CS2}
{S^a} &= \int_{\Omega} \rho s + \frac{\rho^{a} }{2} \lambda_{s} \Gamma^2(\nabla c) 
\eqqcolon \!\int_{\Omega} \si_{\mathrm{Total}}^a\,,
\\
\mathfrak{F}^a\ &= \int_{\Omega} \rho f  
+ \frac{ \rho^{a}}{2} \lambda_{f}(T) \Gamma^2(\nabla c) \,,
\label{calf}
\eal
where $u$, $s$ and $f$ stand for the classical specific internal energy, entropy,  and free energy,  respectively, the coefficients $\lambda_s$ and $\lambda_u$ are constant parameters, and $\lambda_f(T)$ is a parameter depending on the temperature that  will lead to anisotropic surface energy effects.  He have defined the total densities $e^a_{\mathrm{Total}}$ and $\si^a_{\mathrm{Total}}$ for later use.  The parameter $a$ takes on  two values:  $a = 0$  reduces  \eqref{TE1} and  \eqref{CS2}  to  the expressions of \citet{anderson2000phase}, {where we set $\ep^2_{E}=\la_{u}$ and  $\ep^2_{S}=-\la_{s}$,} while for  $a= 1$  they reduce to those used by  \citet{Guo} provided  the choice of an isotropic surface energy is assumed, viz.,   $\Gamma (\nabla c) = |\nabla c|$.    {Thus, as is clear from  \eqref{CS2}, \eqref{TE1}, and \eqref{calf} that  the dimensions of $\la_f$, $\la_s$, and  $\la_u$ are either specific or volumetric depending on the case.  As usual,  we have the thermodynamic relation
\bq
f= u - Ts\,,
\eq
which allows us to assume the relationship between the coefficients 
\bq
    \lambda_f (T) = \lambda_u - T\lambda_s \quad\text{   and   }\quad \frac{\text{d}\lambda_f(T)}{\text{d}T} 
    = -\lambda_s\,.
\eq

To summarize, our expressions \eqref{TE1}  and  \eqref{CS2}     generalize the model studied by  \citet{Guo} by including $\Ga$, which accounts for  anisotropic surface energy effects, while  our expressions  generalize   the model of \citet{anderson2000phase} by including the factors of $\rho$ in the integrands  making  all quantities in the integrands specific quantities multiplied by the density, giving rise to more general sources of energy.}

Because $\Gamma$ is a homogeneous function of degree unity, 
\bq
\Gamma (\lambda \bfp ) = \lambda \Gamma(\bfp ) \text{  for all   } \lambda > 0\,.
\label{Ehomo1}
\eq
Differentiating  \eqref{Ehomo1} with respect to $\la$ and then setting $\la=1$  yields the fundamental relation
\bq
 \Gamma(\bfp ) = \bfp\cdot \bfxi \coloneqq p_j  \frac{\partial \Gamma(\bfp)}{\partial p_j}\,.
 \label{GF1}
 \eq
Then, differentiating \eqref{GF1} gives a second well-known relation, 
\bq
\frac{\p \Ga}{\p p_i}=\frac{\p}{\p p_i} ( \bfxi\cdot \bfp)= \xi_i + \frac{\p^2 \Ga}{\p p_i \p p_j}p_j
= \xi_i\,,
\label{GF2}
\eq
where evidently  $p_j$ must be  a null eigenvector of the matrix $ {\p^2 \Ga}/{\p p_i \p p_j}$. 
Henceforth we will assume the argument of $\Ga$ to be $\nabla c$.  For the case of isotropic surface energy,  where $\Ga(\nabla c)= |\nabla c|$, the  associated homogeneous function of degree zero is given by 
\bq
{\bfxi} = \nabla c/|\nabla c|\,.
\eq

From \eqref{calf} we can obtain a generalized chemical potential
\bal
\mu^a_\Ga&\coloneqq \frac{\de \mathfrak{F}^a}{\de \tilde{c}}=\rho\frac{\p u}{\p \tilde{c}} -\nabla\cdot(\la_f \rho^a\Ga \nabla c)
\nn\\
&=\mu -\frac1{\rho} \nabla\cdot(\la_f \rho^a\Ga \bfxi)\,,
\eal
where recall $\tilde{c}=\rho c$.  For the case of isotropic surface energy, this becomes
\bq
\mu^a_{|\nabla c|}=\mu -\frac1{\rho} \nabla\cdot(\la_f \rho^a\nabla c)\,.
\eq
Upon setting  $a=1$   \citep[the case of][]{Guo},  this reduces to 
\bq
\mu^1_{|\nabla c|}=\mu -\frac1{\rho} \nabla\cdot(\la_f \rho\,  \nabla c)\,, 
\eq
an expression that differs from that in \citet{Guo} unless $\la_f \rho$ is constant. If this is the case and we choose a  classical $\mu=c^3-c$, corresponding to  the quartic  Laudau potential,   we obtain
\bq
\mu_{CH}= c^3-c - {\la_f } \nabla^2 c\,, 
\eq
 the  chemical potential of  Cahn and Hilliard who indeed make these assumptions  \citep[cf.\ page 267 of][]{cahn1958free}.   For  $\al=0$  \citep[the case of][]{anderson2000phase}, we have 
\bq
\mu^0_\Ga =\mu -\frac1{\rho} \nabla\cdot(\la_f \,\Ga \bfxi)\,,
\eq
 which allows for the weighted mean curvature effects of anisotropy.
 
Maintaining the same set of dynamical variables as in \eqref{psim0} (or an equivalent set) and making  the choices of energy and entropy functionals of \eqref{TE1} and \eqref{CS2},  we have completed the first two steps of the algorithm.

\subsection{Noncanonical Poisson bracket of the  Cahn-Hilliard-Euler system}
\label{ssec:chnsPB}

To complete the next step of the algorithm, the third step, we need  to manufacture a bracket that has \eqref{CS2} as a Casimir invariant. We do this by starting from the bracket of \eqref{PB1} in terms of the original variables $\Psi=\{\bfm,\rho, \tilde{c},\si\}$ and then transforming it to a new set of dynamical variables  $\hat{\Psi}^a\!\coloneqq \{\hat\bfm,\hat\rho, \hat{\tilde{c}},\hat{\si}^a \}$, giving the same bracket in terms of new coordinates.  We have included the superscript $a$ because in effect we have two sets of coordinates, corresponding to the desired entropies of \eqref{CS2} for $a=0$ and $a=1$.  To distinguish the old from the new,  we write the bracket in the transformed variables as  $\{\hat{F},\hat{G}\}^a$.    Because of coordinate invariance, $\{C,F\}= \{C^a,\hat{F}\}^a=0$, where $F[\Psi]=\hat{F}[\hat\Psi^a]$ is any functional written in one or the other coordinates. The Casimir $S=\int_\Om \si$ in our original coordinates is transformed into a different form in  the new coordinates. Specifically, we change the  variables  as follows:
\bal
{\mathbf{m}} &= \hat{\mathbf{m}}\,,\quad 
  {\rho} = \hat{\rho}\,,\quad {\Tilde{c}}  = \hat{\Tilde{c}}\,,
  \nn\\
   \sigma &= \hat{\sigma}^a  +  \frac{\hat{\rho}^a}{2}  \lambda_{s} \Gamma^2(\nabla \hat{c}) \,, 
  \label{hattrans}
\eal
where  $\hat{\tilde{c}}=\hat\rho \hat{c}$. 
Consequently, the entropy $S$ in the old coordinates written in terms of the new coordinates  will, by design,  become the following Casimir for the Poisson bracket in the new coordinates:
\bq
\hat{S}^a=\int_\Om \hat{\sigma}^a  +  \frac{\hat{\rho}^a}{2}  \lambda_{s} \Gamma^2 (\nabla\hat{c})\,.
\eq
Thus we have manufactured a bracket with the entropy expression of \eqref{CS2} as a Casimir.

Transformation of the Poisson bracket \eqref{PB1} requires use of the functional chain rule. 
For convenience we use
\bq
   \hat{\sigma}^a  = \sigma -  \frac{\rho^a}{2}  \lambda_{s} \Gamma^2(\nabla c)
\eq
and consider  the variation of any functional of  the new variables.  Thus we use   $\delta \rho = \delta \hat{\rho}$, $\delta \mathbf{m} = \delta \hat{\mathbf{m}}$, $\delta \Tilde{c} = \delta \hat{\Tilde{c}}$, and for the entropy variable 
\bal
\delta \hat{\sigma}^a &= \delta \sigma - \frac{1}{2}a \rho^{a-1}\lambda_{s}\Gamma^2(\nabla c)\delta \rho - \rho^{a} \lambda_{s} \Gamma{\bfxi}\cdot\nabla \delta\left(\frac{\Tilde{c}}{\rho}\right)
\nn\\
&= \delta \sigma - \frac{1}{2}a \rho^{a-1}\lambda_{s}\Gamma^2(\nabla c)\delta \rho  - \rho^{a} \lambda_{s} \Gamma {\bfxi}\cdot \nabla \left(\frac{\delta \Tilde{c}}{\rho}  \right)
\nn\\
& \hspace{2 cm} + \rho^{a} \lambda_{s} \Gamma {\bfxi}\cdot\nabla\left(\frac{\Tilde{c}}{\rho^2} \delta \rho\right)\,, 
\eal
where use has been made of \eqref{GF2}.   Now let $F$ be an arbitrary functional of the original variables and $\hat{F}$ the same  functional in terms of the new variables.  Thus,  
\bal
\int_{\Omega} {\hat{F}}_{\hat{\bfm}} \cdot\, \delta\hat{\mathbf{m}}
&+ \hat{F}_{\hat\rho} \, {\delta \hat{\rho}}  +  \hat{F}_{\hat{\sigma}^a} \,\de {\hat{\sigma}^a} +  \hat{F}_{\hat{\Tilde{c}}}\,
\delta\hat{\Tilde{c}} 
\\
&= \int_{\Omega} F_\bfm \cdot \delta\mathbf{m} +F_\rho\, \delta{\rho} + F_\si\, \delta{\sigma} +  F_{\Tilde{c}}\, \delta{\Tilde{c}}\,.
\nn
\eal
Note, no sum over $a$ is to be done.  By identification of terms we obtain
\bal
    F_\bfm &=  \hat{F}_{\hat{\bfm}}\,,  \quad  F_\si  =  \hat{F}_{\hat{\sigma}_a}\,,
     \nn \\
    F_{\rho} &= \hat{F}_{\hat{\rho}} 
    -  \frac{a}{2} \hat{\rho}^{a-1}\lambda_{s}\Gamma^2  \,  \hat{F}_{\hat{\sigma}^a}
    -  \frac{\hat{\Tilde{c}}}{\hat{\rho}^2}\,  \nabla\cdot\left(\hat{\rho}^{a}\lambda_s \Gamma{\bfxi}\, 
     \hat{F}_{\hat{\sigma}^a} \right)\,,
    \nn\\
    F_{\Tilde{c}} &=  \hat{F}_{\hat{\Tilde{c}}} + \frac1{\hat{\rho}} \nabla\cdot\left(\hat{\rho}^{a}\lambda_s  \Gamma{\bfxi}  
    \hat{F}_{\hat{\sigma}^a} \right)\, .
    \label{var2}
\eal 
 The transformed Poisson bracket  is obtained by inserting the expressions of \eqref{var2} into \eqref{PB1}, writing it entirely  in terms of the hat variables. Upon doing this and then dropping the hats,   we get for any functionals ${F}$ and ${G}$  the following Poisson bracket:
\bal
\{{F},G\}^a &= -  \int_{\Omega} {\mathbf{m}} \cdot\Big[F_\bfm \cdot\nabla G_\bfm -  G_\bfm\cdot\nabla  F_\bfm \Big] 
\nn\\
&+ {\rho}\,  \Big[ 
F_\bfm \cdot \nabla \big( G_\rho 
 - {a} \rho^{a-1}\lambda_{s}  \Gamma^2  G_{\si^a}/2
\nn\\
& \hspace{2cm} - {\Tilde{c}} \,  \nabla\cdot
   (\rho^{a}\lambda_s \Gamma{\bfxi} G_{\si^a} )/\rho^2 \big)
\nn\\
&\hspace{.5cm}- G_\bfm \cdot\nabla  \big(F_\rho  - {a} \rho^{a-1}\lambda_{s} \Gamma^2  F_{\si^a} /2 
\label{GenBkt}\\
& \hspace{2cm}  -  {\Tilde{c}} \, \nabla\cdot\left(\rho^{a}\lambda_s \Gamma{\bfxi} F_{\si^a} \right)/\rho^2\big) \Big] 
\nn\\
&+ \left(
{\sigma}^a  +  {{\rho}^a}  \lambda_{s} \Gamma^2/2
\right)
\Big[F_\bfm \cdot\nabla G_{\si^a}  - G_\bfm \cdot\nabla F_{\si^a} \Big] 
\nn\\
&+ \Tilde{c}\,  \Big[ F_\bfm \cdot\nabla \left(G_{\Tilde{c}}
+  \nabla\cdot\left(\rho^{a}\lambda_s  \Gamma{\bfxi} G_{\si^a} \right)/\rho \right)  
\nn\\
& \hspace{.5cm} - G_\bfm \cdot\nabla \left(F_{\Tilde{c}} + \nabla\cdot\left(\rho^{a}\lambda_s
\Gamma{\bfxi} F_{\si^a}\right)/\rho\right)\Big]\,.
\nn
\eal
 This bracket is clearly bilinear and skew-symmetric.  Because it was derived from the  bracket \ref{PB1} by a change of variables,  satisfaction of the Jacobi identity is assured. We note, as before,   strong boundary conditions are assumed such that all integrations by parts produce vanishing boundary terms.

Thus we have completed the third part of our algorithm, the construction of a Poisson bracket that has  entropy functional  of \eqref{CS2} in the set of its Casimir invariants.  {Recall  the integrand of the entropy is given by
\bq
\si_{\mathrm{Total}}^a\coloneqq  \si^a + \frac{\rho^{a} }{2} \lambda_{s} \Gamma^2(\nabla c) \,;
\label{sitotal}
\eq
so we find}
\bal
\frac{\de S^a}{\de \si^a}&=1\,,
\quad
   \frac{\de S^a}{\de \tilde{c}}= -  \frac1{\rho} \nabla\cdot\left( 
   \rho^{a} \lambda_{s} \Gamma {\bfxi}  \right)
\nn\\
\frac{\de S^a}{\de \rho}&= \frac{a}{2}\rho^{a-1}\la_s \Ga^2 + \frac{\Tilde{c}}{\rho^2}  \nabla\cdot \left(\rho^{a} \lambda_{s} \Gamma {\bfxi}\right)\,. 
\label{crule}
\eal
Using \eqref{crule}  one  can easily check that $\{F,S^a\}=0$ for all $F$, which by construction had to be the case. Now we are free to choose any Hamiltonian we desire in \eqref{GenBkt}   to obtain the evolution of any observable $o$ as follows: 
\bq
    \partial_t o = \{o, H^a\}^a\,,
    \label{fH2}
\eq
In \S\S~\ref{ssec:chnsHS} we proposed  the Hamiltonian functional  of  \eqref{TE1}, which we rewrite as follows in order to make all arguments clear:
\bal
 H^a[\rho,\mathbf{m},\sigma,\Tilde{c}]& = \int_{\Omega}  \frac{|\mathbf{m}|^2}{2\rho} + \rho u\left(\rho,\frac{\sigma^a}{\rho},\frac{\Tilde{c}}{\rho}\right)
 \nn\\
 & \hspace{1.6cm}+ \frac{\rho^{a}}{2}\lambda_u\Gamma^2\left(\nabla \left(\frac{\tilde{c}}{\rho}\right)\right) \,.
\eal
Using the functional derivatives of this Hamiltonian, 
\bal
   H^a_\bfm &= \mathbf{v},\quad H_{\si^a} = T\,,
   \nn\\
H^a_\rho &= - |\mathbf{v}|^2/2 + u +   {p}/{\rho} -sT - c\mu 
\nn\\
& \hspace{1cm}+  a \rho^{a -1}\lambda_u \Gamma^2 /2 + {c}\, \nabla\cdot (\rho^{a}\lambda_u\Gamma{\bfxi})/\rho \,,
\nn\\
  H^a_{\Tilde{c}} & = \mu - \nabla\cdot(\rho^{a}\lambda_u \Gamma{\bfxi})/\rho\,, 
    \label{Hover}
\eal
in the bracket \eqref{GenBkt} gives the equations of motion in the form of \eqref{fH2}.   At this point we could write this out and display a general system of equations that includes both cases, but we choose to consider them separately because the general system is unwieldy and not particularly  perspicuous. 

Let us first consider the  simplified version of our  derived Poisson bracket for the case $a= 1$, which is as follows:
\bal
\{F,G\}^1 &= -  \int_{\Omega} \mathbf{m}\,  \cdot \big[F_\bfm \cdot \nabla G_\bfm -G_\bfm \cdot\nabla F_\bfm \big]
\nn\\
&+ \rho \big[F_\bfm \cdot\nabla G_\rho - G_\bfm\cdot  \nabla F_\rho \big] 
\nn\\
&- \lambda_s\big[ F_\bfm  \cdot\nabla \cdot \left(\rho  G_{\si^1}  
\Gamma {\bfxi}\otimes\nabla c\right) 
\nn\\
&\hspace{2cm} - G_\bfm \cdot\nabla \cdot \left(\rho  F_{\sigma^1} \Gamma{\bfxi}\otimes\nabla c\right)\big]
\nn\\
&+ \sigma^1 \left[F_\bfm  \cdot\nabla G_{\si^1}  -  G_\bfm  \cdot\nabla F_{\si^1} \right]
\nn\\
&+ \Tilde{c} \left[F_\bfm  \cdot\nabla G_{\Tilde{c}} - G_\bfm \cdot\nabla F_{\Tilde{c}} \right]\,,
\label{question}
\eal
where $\otimes$ denotes tensor product of two vectors and consistent with our convention we have
\[
\nabla\cdot(\bfu\otimes\bfv)= (\nabla\cdot \bfu) \bfv  + \bfu\cdot\nabla \bfv\,.
\]
Using \eqref{Hover}, the bracket form of \eqref{fH2} gives the ideal diffuse two-phase flow system
\bal
\p_t \mathbf{v} &=\{\bfv,H^1\}^1
\nn\\
&= -\bfv\cdot\nabla \bfv -\frac{1}{\rho}\nabla \cdot (p\tensI  +\rho\lambda_f \Gamma{\bfxi}\otimes\nabla c)\,,
   \label{pvt1}
\\
\p_t \rho  &=\{\rho,H^1\}^1= -\bfv\cdot\nabla\rho   -\rho\, \nabla\cdot\mathbf{v}\,,
      \label{prt1}
\\
   \p_t \Tilde{c} &=\{\tilde{c},H^1\}^1= -\bfv\cdot\nabla\tilde{c} -\Tilde{c}\,\nabla\cdot\mathbf{v}\,, 
      \label{pct1}
\\
 \p_t \si^{1}_{\mathrm{Total}} &=\{{\si}^{1}_{\mathrm{Total}}  ,H^1\}^1
 \nn\\
 & = -\bfv\cdot\nabla{\si}^{1}_{\mathrm{Total}}  -{\si}^{1}_{\mathrm{Total}} \,\nabla\cdot\mathbf{v}\,,
      \label{pst1}
\eal 
where $\tensI$ is the unit tensor.   Observe in \eqref{pst1} we have chosen the observable $\si^{1}_{\mathrm{Total}}$
instead of $\si^1$, in order to demonstrate its conservation.

Similarly, for the case where $a = 0$, the Poisson bracket has the following form: 
\bal
\{F,G\}^0 &= -  \int_{\Omega} \mathbf{m}\cdot\, \big[F_\bfm\cdot\nabla G_\bfm -  G_\bfm  \cdot\nabla F_\bfm \big]
\nn \\
& + \rho\big[F_\bfm \cdot\nabla G_\rho  -G_\bfm \cdot\nabla F_\rho \big]
\nn \\
& -\lambda_s\big[F_\bfm \cdot\nabla \cdot\left(G_{\si^0}\Gamma{\bfxi}\otimes\nabla c\right) 
\nn\\
& \hspace{2.4cm} -G_\bfm \cdot\nabla \cdot \left(F_{\si^0} \Gamma{\bfxi}\otimes\nabla c\right)\big] 
\nn \\
&+ {\lambda_s} \left[F_\bfm \cdot\nabla\left( \Gamma^2    G_{\si^0}   \right) - G_\bfm
\cdot\nabla\left(\Gamma^2 F_{\si^0} \right)\right] /2
\nn \\
& + \sigma^0 \left[F_\bfm \cdot\nabla G_{\si^0} -  G_\bfm \cdot  \nabla   F_{\si^0} \right]
\nn\\
&+ \Tilde{c} \left[ F_\bfm \cdot\nabla   G_{\Tilde{c}} - G_\bfm  \cdot\nabla F_{\Tilde{c}}  \right]\,.
\eal
Same as above, using \eqref{Hover}, the ideal diffuse two-phase flow system is produced
\bal
  \p_t \mathbf{v}  &=\{\bfv,H^0\}^0
   \label{pvt0}\\
  &= -\bfv\cdot\nabla \bfv -\frac{1}{\rho}\nabla \cdot 
   \Big[\left(p -  \lambda_f\Gamma^2/2\right)\tensI 
   \nn\\
   &\hspace{3.25cm} +\lambda_f \Gamma{\bfxi}\otimes\nabla c\Big]\,,
\nn
\\
  \p_t \rho  &=\{\rho,H^0\}^0= -\bfv\cdot\nabla\rho   -\rho\, \nabla\cdot\mathbf{v}\,,
      \label{prt0}
\\
    \p_t \Tilde{c} &=\{\tilde{c},H^0\}^0= -\bfv\cdot\nabla\tilde{c} -\Tilde{c}\,\nabla\cdot\mathbf{v}\,,
      \label{pct0}
\\
 \p_t \si^{0}_{\mathrm{Total}} &=\{{\si}^{0}_{\mathrm{Total}}  ,H^0\}^0
 \nn\\
 & = -\bfv\cdot\nabla{\si}_{\mathrm{Total}} ^{0} -{\si}_{\mathrm{Total}} ^{0}\,\nabla\cdot\mathbf{v}\,.
      \label{pst0}
\eal 
where recall from \eqref{GF1}, $\xi = \partial\Ga/\partial \bfp$ .

Let us now comment on these two Hamiltonian systems.  By construction both the $a=1$ and $a=0$ systems conserve their Hamiltonians and entropies, as given   by \eqref{CS2} and \eqref{TE1} with $a=1$ and $a=0$ , respectively.  Both systems have  momentum equations containing  a term describing anisotropic surface energy (capillary) effects.  The $a=0$  system of \eqref{pvt0}--\eqref{pst0} is  identical to the ideal limit of that given  in  the work of \citet{anderson2000phase}.  Upon choosing $\Ga(\nabla c)= |\nabla c|$, the  $a=1$ system of  \eqref{pvt1}--\eqref{pst1}  should correspond to the ideal limit of that of \citet{Guo}, but it does not.  In fact the system of  \citet{Guo} in this limit does not conserve energy. Moreover,  the  capillary effect in their momentum equation (equation (3.40)), which  should be  replaced by  \eqref{pvt1}   with  $\Ga(\nabla c)= |\nabla c|$, vanishes  in the one-dimensional limit.  Since such surface effects are determined by  mean or weighted mean curvature \citep{taylor92}, it is clear that this is physically untenable.  Fortunately, our  method provides a simple fix to their equations, while showing how to generalize them to include anisotropic surface effects. 

An alternative but equivalent Hamiltonian formulation of the above systems exists, in fact, one that has  a standard entropy functional of the form of \eqref{entsi}.   Given that the bracket of \eqref{GenBkt} was obtained via a transformation of the bracket of \eqref{PB1}, we can transform it back from one that has \eqref{CS2} as a Casimir to the original  that has \eqref{entS} as a  Casimir.   However, to generate equivalent equations of motion, we would have to transform the Hamiltonian of \eqref{TE1} into a more complicated form.  Tracing back  through our transformations, we would replace the coordinate  $\si^a$ in the Hamiltonian by $\sigma -  {\rho^a} \lambda_{s} \Gamma^2 /2
$,   which means  the internal energy becomes
\bq
 u\mapsto u(\rho,(\si - {\rho^a} \lambda_{s} \Gamma^2 /2)/{\rho},\Tilde{c}/\rho)\,,
 \label{uNew}
 \eq
while otherwise the Hamiltonian remains the same.  Just as with finite-dimensional  Hamiltonian systems,  one can change coordinates and arrive at equivalent systems with different Poisson brackets and Hamiltonians, and in the noncanonical case different expressions for the Casimir invariants.  Often one has the options of a simple bracket and complicated Hamiltonian or vice verse.

\subsection{Metriplectic 4-bracket for the  Cahn-Hilliard-Navier-Stokes  system}
\label{ssec:chns4MB}

Now we turn to the 4th  and final step of our algorithm.  Just as in \S\S~\ref{sssec:GNS} we build a metriplectic 4-bracket using the K-N construction.  We suppose our multi-component field  variable  is
\bq
\psi(\bfx,t) = (\bfm(\bfx,t), \rho(\bfx,t),\Tilde{c}(\bfx,t), \sigma^{a}(\bfx,t))  
\label{psim}
\eq
and in the  K-N construction we use a more general expression for $\Si$ akin to that  mentioned in \eqref{Sgma2}, viz.
\bal
M(dF ,dG ) &= F_{{{\si}^{a}}}G_{{{\si}^{a}}}\,,
\\
{\Sigma}(dF ,dG ) &=  \nabla F_{{\bfm}} : \4tensLa_1: \nabla G_{{\bfm}} +  \nabla F_{{{\si}^{a}}}\cdot \tensLa_2\cdot \nabla G_{{{\si}^{a}}} 
\nn\\
&\hspace{1.5cm} + \nabla \mathcal{L}^a_{\Tilde{c}}(F) \cdot\tensLa_3\cdot \mathcal{L}^a_{\Tilde{c}}(F) \,,
\label{SiSI3}
\eal
where $a$, of course, is not to be summed over and the pseudodifferential operator $\mathcal{L}^a_{\Tilde{c}} $ has the following form: 
\bq
\mathcal{L}^a_{\Tilde{c}}(F) := \nabla \big( F_{\Tilde{c}}   + \nabla\cdot\left(\rho^{a}\lambda_s \Gamma{\bfxi} F_{\si^a} \right)/\rho\big)\,.
\eq
Here the tensors $\4tensLa_1$ , $\tensLa_2$ and $\tensLa_3$ are defined by  \eqref{TensorsLa}.  Then, the 4-bracket reads
\bal
(F, K ; G, N)^a&= 
\label{4BDICHNS}\\
& \hspace{-1.9cm} \int_{\Omega}\!\frac{1}{T}\Big[
\left[K_{{\si}^{a}} \nabla   F_{\mathbf{m}}-F_{{\si}^{a}} \nabla   K_{\mathbf{m}}\right] \!
\colon\!\! \4tensLa \colon\!\!
\left[N_{{\si}^{a}} \nabla   G_{\mathbf{m}}-G_{{\si}^{a}} \nabla   N_{\mathbf{m}}\right]
\nn\\
&\hspace{-1.88cm}+ \frac{1}{T}
\big[K_{{\si}^{a}} \nabla  F_{{{\si}^{a}}}-F_{{\si}^{a}} \nabla  K_{{{\si}^{a}}}\big]
\cdot\tenska\cdot \big[N_{{\si}^{a}} \nabla  G_{{{\si}^{a}}}-G_{{\si}^{a}} \nabla  N_{{{\si}^{a}}}\big]
\nn\\
&\hspace{-1.88cm}+ 
\!\big[K_{{\si}^{a}} \mathcal{L}^a_{\Tilde{c}}(F)  -F_{{\si}^{a}}  \mathcal{L}^a_{\Tilde{c}}(K)\big]\!\cdot\! \tensD\!\cdot\!
\big[N_{{\si}^{a}} \mathcal{L}^a_{\Tilde{c}}(G) - G_{{\si}^{a}}\mathcal{L}^a_{\Tilde{c}}(N)\big].
\nn
\eal
Observe, with the exception of the last line this bracket  is identical to that of \eqref{4BSICHNS}.
 
Upon insertion of  $H^a$ as given by \eqref{TE1}  and $S$  from the set of Casimirs to be as  in  \eqref{CS2}, the  dynamics is given by 
\bq
\p_t\psi^\al = \{\psi^\al, H^a\}^a + (\psi^\al,H^a;S^a,H^a)^a\,. 
\eq
Using  $\mathcal{L}^a_{\Tilde{c}}( {H^a}) =\nabla \mu^a_{\Ga}$,  $H^a_{\bfm} = \bfv$, $H^a_{{\si}_{a}} = T$, $S^a_{{\si}_{a}} = 1$ and    $\mathcal{L}^a_{\Tilde{c}}({S^a}) 
 = 0 $,
  the following diffuse-interface CHNS system for $a=1$ is produced:
 \bal
    \p_t \mathbf{v}  &=\{\bfv,H^1\}^1+ (\bfv,H^1;S^1,H^1)^1
   \nn\\
   &= -\bfv\cdot\nabla \bfv -\frac{1}{\rho}\nabla\cdot \big[ p\mathbf{I} +\lambda_f \rho\Gamma{\bfxi}\otimes\nabla c\big]
   \nn\\
   &\hspace{3 cm} + \frac1{\rho}\nabla\cdot(\4tensLa:\nabla \mathbf{v})\,,
   \label{Dpvt31}
\\
 \p_t \rho &=\{\rho,H^1\}^1 + (\rho,H^1;S^1,H^1)^1
 \nn\\
 &= -\bfv\cdot\nabla\rho   -\rho\, \nabla\cdot\mathbf{v} \,,
      \label{Dprt31}
      \\
 \p_t \Tilde{c}  &=\{\tilde{c},H^1\}^1+ (\tilde{c},H^1;S^1,H^1)^1
 \nn\\
 &= -\bfv\cdot\nabla\tilde{c} - \Tilde{c}\,\nabla\cdot\mathbf{v}
 + \nabla\cdot(\tensD\cdot\nabla\mu^1_\Ga)\,,
     \label{Dpct31}
\\
    \p_t \si^1_{\mathrm{Total}} &=\{\si^1_{\mathrm{Total}} ,H^1\}^1  + (\si^1_{\mathrm{Total}} ,H^1;S^1,H^1)^1
    \nn\\
    &= -\bfv\cdot\nabla  \si^1_{\mathrm{Total}} 
     - \si^1_{\mathrm{Total}}  \,\nabla\cdot\mathbf{v}
       \nn\\ &\quad + \nabla\cdot\left(\frac{\tenska}{T}\cdot\nabla T\right) + \frac{1}{T^2}\nabla T\cdot\tenska\cdot\nabla T  
          \label{Dpst31}\\
       &\hspace{1cm}+ \frac{1}{T}\nabla\mathbf{v}:\4tensLa: \nabla\mathbf{v} + \frac{1}{T} 
       \nabla\mu^1_\Ga \cdot \tensD\cdot\nabla\mu^1_\Ga \,.
 \nn
\eal 
Similarly,  for $a= 0$ we obtain
 \bal
  \p_t \mathbf{v} &=\{\bfv,H^0\}^0+ (\bfv,H^0;S^0,H^0)^0 
   \nn\\
   &= -\bfv\cdot\nabla \bfv -\frac{1}{\rho}\nabla \cdot \Big[\left(p -  \lambda_f\Gamma^2/2\right)\mathbf{I}
    \nn\\
    & \hspace{1cm}+\lambda_f \Gamma{\bfxi}\otimes\nabla c\Big]
 + \frac1{\rho}\nabla\cdot(\4tensLa:\nabla \mathbf{v})\,,
   \label{Dpvt30}
\\
    \p_t \rho  &=\{\rho,H^0\}^0+ (\rho,H^0;S^0,H^0)^0
    \nn\\
    &= -\bfv\cdot\nabla\rho   -\rho\, \nabla\cdot\mathbf{v} \,, 
      \label{Dprt30}
      \\
    \p_t \Tilde{c}  &=\{\tilde{c},H^0\}^0 + (\tilde{c},H^0;S^0,H^0)^0
     \nn\\
    &= -\bfv\cdot\nabla\tilde{c} -\Tilde{c}\,\nabla\cdot\mathbf{v}
 + \nabla\cdot(\tensD\cdot\nabla\mu^0_\Ga)\,,
      \label{Dpct30}
    \\
    \p_t \si^0_{\mathrm{Total}}  &=\{\si^0_{\mathrm{Total}} ,H^0\}^0 + (\si^0_{\mathrm{Total}} ,H^0;S^0,H^0)^0
      \nn\\
    &= -\bfv\cdot\nabla \si^0_{\mathrm{Total}} 
     -  \si^0_{\mathrm{Total}} \,\nabla\cdot\mathbf{v}
           \label{Dpst30}
           \\ &\quad + \nabla\cdot\left(\frac{\tenska}{T}\cdot\nabla T\right) + \frac{1}{T^2}\nabla T\cdot \tenska\cdot\nabla T  
            \nn\\
    &\hspace{1cm}+ \frac{1}{T}\nabla\mathbf{v}:\4tensLa : \nabla\mathbf{v} + \frac{1}{T} 
       \nabla\mu^0_\Ga\cdot \tensD \cdot \nabla\mu^0_\Ga \,.
  \nn
\eal 
Thus we have extracted from our general system with arbitrary $a$,   two  thermodynamically consistent  CHNS  systems.  By construction both the $a=1$ and $a=0$ systems must conserve energy   and both must produce entropy,  which we find is governed by the following: 
\bal
\dot{S^a}&= (S^a, H^a ; S^a, H^a)^a 
\nn\\
&= \int_\Om \frac{1}{T} \bigg[
\nabla\mathbf{v}:\4tensLa:\nabla\mathbf{v} + \frac{1}{T}\nabla T \cdot \tenska\cdot \nabla T 
\nn\\
&\hspace{3.10cm}+   \nabla\mu^a_\Ga\cdot \tensD\cdot\nabla\mu^a_\Ga
\bigg] \geq 0\,.
\label{dots3}
\eal

\subsection{Metriplectic 2-bracket for the  Cahn-Hilliard-Navier-Stokes  system}
\label{ssec:chns2MB}

Proceeding as  in the previous section, using $\mathcal{L}^a_{\Tilde{c}}( {H}^a) = \nabla(\mu -\frac1{\rho} \nabla\cdot(\la_f \rho^a\Ga \bfxi)) =\nabla  \mu_{\Ga}^a$,  $H^a_\bfm = \bfv$, $H^a_{{\si}^{a}} = T$, the metriplectic 2-bracket emerges directly from the 4-bracket as follows:
\bal
(F, {G})^a_{ {H^a}}&=(F,{H^a}; {G}, {H^a})^a
\label{4BDICHNSS}\\
&\hspace{-.6cm}=  \int_{\Omega}\frac{1}{T}\Big[
\left[T \nabla   F_{\mathbf{m}}-F_{{\si}^{a}} \nabla   \bfv \right]
: \4tensLa : 
\left[T \nabla   G_{\mathbf{m}}-G_{{\si}^{a}} \nabla   \bfv\right]
\nn\\
&\hspace{-.37cm}+\ \frac{1}{T}\,
\big[T \nabla  F_{{{\si}^{a}}}-F_{{\si}^{a}} \nabla  T\big]
\cdot \tenska\cdot\big[T \nabla  G_{{{\si}^{a}}}-G_{{\si}^{a}} \nabla  T\big]
\nn\\
&\hspace{-.37cm}+ 
\big[ T\mathcal{L}^a_{\tilde{c}}(F)-F_{\si^a}\nabla \mu_{\Ga}^a \big]
\!\cdot\tensD\cdot\!
    \big[T\mathcal{L}^a_{\tilde{c}}(G)-G_{\si^a}\nabla \mu_{\Ga}^a \big]  
\Big]\,,
\nn
\eal
which is the analog of \eqref{4BSICHNS} for the GNS system in \S\S~\ref{sssec:GNS}.

Now we could proceed  as in  \S\S~\ref{sssec:genTh} and obtain the analog of $L_{\al\be}$ by transforming to  the  coordinates  $\xi^a \!\! \coloneqq (\bfm , \rho ,\Tilde{c},e^a_{\mathrm{Total}})$, 
where $e^a_{\mathrm{Total}}$ is the total energy density defined in \eqref{TE1}.  This would lead to a metriplectic 2-bracket  analogous to \eqref{2bktxi} and the concomitant   flux and affinity relations analogous to \eqref{tensorL2} would emerge.  Because this is hardly more enlightening than \eqref{4BDICHNSS}, we do not record this result here.  

\section{Summary and Conclusions}
\label{sec:conclu}

In this paper we have described the metriplectic 4-bracket formalism and how it can algorithmically be used in the context of multiphase fluids to construct thermodynamically consistent models, ones that conserve energy and produce entropy.  In particular, we have used it in \S~\ref{sec:GNS} to obtain the GE Hamiltonian system,  which adds a concentration variable to conventional Eulerian fluid mechanics, and to the GNS system,  a generalization of the Navier-Stokes system that is thermodynamically consistent with the collection of  thermodynamic fluxes. Then, in \S~\ref{sec:DICHNS} we used the algorithm to obtain a class of Hamiltonian fluid systems that allow for anisotropic surface effects, followed by the construction of   a general class of CHNS systems that  couple Cahn-Hilliard physics with that of  Navier-Stokes dynamics,  in a thermodynamically consistent way.  The systems we obtain generalizes previous work by including anisotropic effects in the surface tension and all phenomenological parameters.   

A cornerstone of Hamiltonian dynamics is its geometric invariance under coordinate changes.  Because the minimal metriplectic properties are algebraic and geometric, they too are invariant under coordinate changes.  Thus,  we can write our  CHNS class of dissipative systems with a standard entropy functional of the form of \eqref{entsi}, but with a more complicated Hamiltonian using \eqref{uNew}.

From the examples presented, it is clear that  the  4-bracket formalism can be applied to obtain a wide variety of dynamical systems in various fields.  In fact it was recently applied to obtain generalized collision operators in kinetic theory \citep{pjmS24} and a thermodynamically consistent model for radiation hydrodynamics \citep{tran_etal}. Although incompressible flows don’t have the usual thermodynamics associated with compression and pressure, they can be included in the metriplectic formalism by using the techniques of Chandre et al. (2013).

The metriplectic 4-bracket formalism also provides an avenue for designing structure preserving numerical algorithms 
\citep[see e.g.][]{pjm17}.  Any discretization that preserves the symmetries of the 4-bracket, which is not a difficult task,  will be thermodynamically consistent on the semi-discrete level, i.e. produce a set of ordinary differential equations that conserve energy and produce entropy.

\hspace{1cm}

\appendix


\section{Semidefinite curvature}
 \label{A:seccurve}
 
We define the binary operations $\langle \cdot, \cdot\rangle_\Si$ and $\langle \cdot, \cdot\rangle_M$ that satisfy all of the axioms of an inner
product space, except the non-degeneracy condition 
\bal
\langle F, G\rangle_\Si&:=\int  \! \!  d^N\!z \!\! \int \! \! d^N\!z'\, \Si^{\al\be}(z,z') \frac{\delta F}{\delta \chi^\al(z)} \frac{\delta G}{\delta \chi^\be(z')}\,,
\nn\\
\langle F, G\rangle_M&:=\!\!\int  \! \!  d^N\!z'' \!\! \int \! \! d^N\!z'''\, M^{\ga\de}(z'',z''') \frac{\delta F}{\delta \chi^\ga(z'')} \frac{\delta G}{\delta \chi^\de(z''')}\,,
\nn
\eal
where $\Si$ and $M$ are positive semi-definites. We have the Cauchy-Schwarz inequality
\bq
\left|\langle F, G\rangle_\Si\right| \leq \sqrt{\langle F, F\rangle_\Si} \,\,\sqrt{\langle G, G\rangle_\Si} 
=\|F\|_\Si\|G\|_\Si\,.
\nn
\eq

\begin{lemma}
A metriplectic quadravector constructed using the $K-N$ product, has non-negative sectional curvature,
\bal
K(F,G) &= \!\!\!\int  \! \!  d^N\!z \!\! \int \! \! d^N\!z'\,\! \!\! \int  \! \!  d^N\!z'' \!\!\! \int \! \! d^N\!z'''\, \Si^{ij}(z,z')\, M^{kl}(z''\!\!,z''')
\nn\\
&\hspace{-.75cm}  \times \  \frac{\delta F}{\delta \chi^\al(z)} \frac{\delta G}{\delta \chi^\be(z')}\,
 \frac{\delta F}{\delta \chi^\ga(z'')} \frac{\delta G}{\delta \chi^\de(z''')}
 \	+ \mathrm{\ other\ terms}
 \,.\nn
\eal
\end{lemma}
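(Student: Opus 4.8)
The plan is to read $K(F,G)$ as the K-N 4-bracket evaluated on the sectional-curvature slot pattern, i.e.\ as $(F,G;F,G)$ with $F$ occupying the first and third arguments and $G$ the second and fourth, exactly as $S$ and $H$ sit in $(S,H;S,H)$. First I would expand $(\Si\wedge M)(dF,dG,dF,dG)$ using the definition of the K-N product; the four terms, after invoking the symmetry of $\Si$ and $M$ in their two arguments, collapse into two ``diagonal'' contributions and one ``cross'' contribution, giving
\[
K(F,G)=\Si(dF,dF)\,M(dG,dG)+M(dF,dF)\,\Si(dG,dG)-2\,\Si(dF,dG)\,M(dF,dG)\,.
\]
The single term written in the statement, together with the ``other terms,'' is precisely this combination.

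Next I would pass to the seminorm notation of the appendix, writing $\|F\|_\Si^2=\langle F,F\rangle_\Si=\Si(dF,dF)$ and likewise for $M$; these are genuine seminorms because $\Si$ and $M$ are positive semidefinite. In this language
\[
K(F,G)=\|F\|_\Si^2\,\|G\|_M^2+\|F\|_M^2\,\|G\|_\Si^2-2\,\langle F,G\rangle_\Si\,\langle F,G\rangle_M\,.
\]
The heart of the argument is then the Cauchy-Schwarz inequality recorded just before the lemma, applied separately in each form: $|\langle F,G\rangle_\Si|\le\|F\|_\Si\|G\|_\Si$ and $|\langle F,G\rangle_M|\le\|F\|_M\|G\|_M$. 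Multiplying these bounds the cross term by the product of the two diagonal factors, $|\langle F,G\rangle_\Si\,\langle F,G\rangle_M|\le\|F\|_\Si\|G\|_\Si\|F\|_M\|G\|_M$.

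Finally I would complete the square. Setting $a=\|F\|_\Si\|G\|_M$ and $b=\|F\|_M\|G\|_\Si$, the two diagonal terms equal $a^2+b^2$ while the cross term is bounded by $2ab$, so
\[
K(F,G)\ \ge\ a^2+b^2-2ab=(a-b)^2=\big(\|F\|_\Si\|G\|_M-\|F\|_M\|G\|_\Si\big)^2\ \ge\ 0\,,
\]
which is the assertion. The same chain also yields the equality statement used in the main text: $K(F,G)=0$ forces both Cauchy-Schwarz inequalities to be saturated with matching signs and $a=b$, and if in addition one of $\Si$, $M$ is positive definite (a true inner product) this collapses to $\de F/\de\chi\propto\de G/\de\chi$.

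I do not expect a substantive obstacle; the result is a repackaging of Cauchy-Schwarz together with $a^2+b^2\ge2ab$. The only points demanding care are bookkeeping ones: keeping the slot assignments and the sign of the cross term consistent through the K-N expansion, so that the cross term is exactly what the completed square requires, and remembering that positive \emph{semi}definiteness supplies only seminorms, so non-degeneracy is unavailable and one cannot conclude proportionality of the gradients without the extra positive-definiteness hypothesis. Neither caveat affects the inequality $K(F,G)\ge0$ itself.
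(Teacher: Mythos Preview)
Your proposal is correct and follows essentially the same route as the paper's proof: expand $K(F,G)$ via the K-N product into $\|F\|_\Si^2\|G\|_M^2+\|F\|_M^2\|G\|_\Si^2-2\langle F,G\rangle_\Si\langle F,G\rangle_M$, then combine Cauchy--Schwarz for each semidefinite form with the elementary inequality $a^2+b^2\ge 2ab$ (equivalently, $(a-b)^2\ge 0$) to conclude nonnegativity. The paper orders the two ingredients slightly differently---starting from $(\|F\|_\Si\|G\|_M-\|G\|_\Si\|F\|_M)^2\ge 0$ and then invoking Cauchy--Schwarz---but the content is identical, and your remark on the equality case matches the paper's Lemma~2.
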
 
\begin{proof}
Direct calculation gives
\bq
K(F, G)=\|F\|_\Si^2\|G\|_M^2-2\langle F, G\rangle_\Si\langle F, G\rangle_M+\|G\|_\Si^2\|F\|_M^2 
\nn\,.
\eq
The following inequality
\bq
\left(\|F\|_\Si\|G\|_M-\|G\|_\Si\|F\|_M\right)^2 \geq 0 
\nn
\eq
implies
\bal
\|F\|_\Si^2\|G\|_M^2+\|G\|_\Si^2\|F\|_M^2 &\geq 2\|F\|_M\|F\|_\Si\|G\|_M\|G\|_\Si \nn\\
&\geq 2\left|\langle F, G\rangle_\Si\right|\left|\langle F, G\rangle_M\right| \nn \\
& \geq 2\langle F, G\rangle_\Si\langle F, G\rangle_M \nn\,,
\eal
where the second inequality follows from  the Cauchy-Schwarz inequality. Evidently, the last inequality implies $K(F, G) \geq 0$ for all $F$ and $G$.
\end{proof} 
\begin{lemma}
We suppose that $\Si$ is positive definite, defining an
inner product. Given any two $\Si$-arbitrary linearly independent $\de F/\de{\chi}$ and $\de G/\de{\chi}$, then the
sectional curvature is strictly positive ($K(F,G) > 0$).
\end{lemma}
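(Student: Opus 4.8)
The plan is to obtain strict positivity by revisiting the chain of inequalities used to prove Lemma~1 and locating where a strict inequality must occur. Recall that the direct calculation there gives
\bq
K(F,G)=\|F\|_\Si^2\|G\|_M^2-2\langle F,G\rangle_\Si\langle F,G\rangle_M+\|G\|_\Si^2\|F\|_M^2\,,
\nn
\eq
and that $K(F,G)\ge 0$ followed from (i) the elementary inequality $a^2+b^2\ge 2ab$ applied to $a=\|F\|_\Si\|G\|_M$ and $b=\|G\|_\Si\|F\|_M$, (ii) the Cauchy--Schwarz inequalities for the two forms, and (iii) the bound $|x|\,|y|\ge xy$. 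To promote ``$\ge$'' to ``$>$'' it is enough to show that at least one of these links is strict. The key new input is that, since $\Si$ is positive definite, $\langle\cdot,\cdot\rangle_\Si$ is a genuine inner product, so its Cauchy--Schwarz inequality is strict exactly when $\delta F/\delta\chi$ and $\delta G/\delta\chi$ are linearly independent; thus $\|F\|_\Si\|G\|_\Si>|\langle F,G\rangle_\Si|$ under our hypotheses.

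The cleaner route I would actually present is geometric and isolates the role of the two-plane $V=\mathrm{span}\{\delta F/\delta\chi,\delta G/\delta\chi\}$. Because $\Si$ is positive definite it restricts to an honest inner product on $V$, so Gram--Schmidt produces a $\Si$-orthonormal basis $\{u_1,u_2\}$ of $V$. Since $\Si\wedge M$ carries all the algebraic symmetries of a curvature tensor, the normalized sectional curvature
\bq
\kappa=\frac{K(F,G)}{\|F\|_\Si^2\|G\|_\Si^2-\langle F,G\rangle_\Si^2}
\nn
\eq
is independent of the basis chosen for $V$, and its denominator (the $\Si$-Gram determinant) is strictly positive precisely by linear independence together with positive definiteness of $\Si$. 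Hence $K(F,G)>0$ if and only if $\kappa>0$. Evaluating the numerator in the orthonormal basis collapses the cross terms and yields the compact expression $\kappa=M(u_1,u_1)+M(u_2,u_2)$, i.e.\ the trace of $M$ restricted to $V$.

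I expect the main obstacle to be exactly the positive semidefiniteness of $M$, rather than any algebraic manipulation. The trace $M(u_1,u_1)+M(u_2,u_2)$ is manifestly nonnegative, but it can degenerate to zero when $M$ annihilates all of $V$ (for instance $M\equiv 0$), in which case $K(F,G)=0$ even for linearly independent arguments. The sharp condition for strict positivity is therefore that $M$ not vanish on $V$; in the inequality-chain version this is why keeping step~(ii) strict requires $\|F\|_M\|G\|_M\neq 0$. The conclusion $K(F,G)>0$ holds for every linearly independent pair as soon as $M$ is itself positive definite, which is the situation of interest; I would accordingly make explicit that both $\Si$ and $M$ define inner products, after which both links become strict and the proof closes immediately.
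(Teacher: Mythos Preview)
Your inequality-chain analysis is exactly the paper's approach: the paper also invokes the strict Cauchy--Schwarz inequality for $\Sigma$ (from linear independence) to sharpen the middle link of the chain from Lemma~1. But you have correctly spotted a gap that the paper's proof glosses over. The paper writes
\[
2\,\|F\|_M\|F\|_\Si\|G\|_M\|G\|_\Si \,>\, 2\,|\langle F,G\rangle_\Si|\,|\langle F,G\rangle_M|\,,
\]
and this step fails when $\|F\|_M\|G\|_M=0$, since both sides then vanish. In particular, if $M$ annihilates the whole plane $V=\mathrm{span}\{\delta F/\delta\chi,\delta G/\delta\chi\}$ (e.g.\ $M\equiv 0$) one gets $K(F,G)=0$ despite linear independence, so the lemma as stated is not quite true. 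Your proposed repair---assume $M$ positive definite, or at least that $M$ does not vanish on $V$---is exactly what is needed, and the paper's argument goes through verbatim once $\|F\|_M\|G\|_M>0$ is available.

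Your second, geometric route is genuinely different from the paper's. Passing to a $\Sigma$-orthonormal basis of $V$ and using the curvature symmetries to reduce $K(F,G)/(\|F\|_\Si^2\|G\|_\Si^2-\langle F,G\rangle_\Si^2)$ to the trace $M(u_1,u_1)+M(u_2,u_2)$ is cleaner and makes the sharp hypothesis transparent: strict positivity holds precisely when $M$ is not identically zero on $V$. The paper stays with norms and pairings throughout and never isolates this trace, which is why the degenerate case slips by unnoticed there.
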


\begin{proof}
Since $\de F/ \de{\chi}$ and $\de G/\de {\chi}$ are $\Si$-Linearly independent, the Cauchy-Schwarz inequality given by 
\bq
\left|\langle F, G\rangle_\Si\right| < \|F\|_\Si\|G\|_\Si\,.
\nn
\eq
In the same way we have 
 \bq
\left(\|F\|_\Si\|G\|_M-\|G\|_\Si\|F\|_M\right)^2 \geq 0 \nn
\eq
implies
\bal
\|F\|_\Si^2\|G\|_M^2+\|G\|_\Si^2\|F\|_M^2 &\geq 2\|F\|_M\|F\|_\Si\|G\|_M\|G\|_\Si \nn \\
& > 2\left|\langle F, G\rangle_\Si\right|\left|\langle F, G\rangle_M\right| \nn\\ 
& > 2\langle F, G\rangle_\Si\langle F, G\rangle_M\,.
\nn
\eal
Hence, we deduce that $K(F,G) > 0$\,.
\end{proof}

 {Finite-dimensional versions of these two lemmas were first reported in \citet{pjmU23}.}

 {\section*{Acknowledgements}

A.Z. acknowledges support from the Mohammed VI Polytechnic University for supporting an internship at the University of Texas at Austin.   P.J.M. acknowledges support from the DOE Office of Fusion Energy Sciences under DE-FG02-04ER-54742, and from a Forschungspreis from the Alexander von Humboldt Foundation. }

\bibliographystyle{elsarticle-harv}\biboptions{authoryear}

\end{document}